\documentclass[runningheads]{llncs}
\usepackage[T1]{fontenc}
\usepackage{longtable}
\usepackage{booktabs}

\usepackage{longtable}
\usepackage{hyperref}
\usepackage{amssymb,amsmath,amsfonts}
\usepackage{mathrsfs}
\usepackage{color}

\usepackage[table]{xcolor}
\usepackage{makecell}
\usepackage{tabularx}
\usepackage{array}
\usepackage{multirow}
\usepackage{colortbl}

\definecolor{lightgreen}{RGB}{226,239,218}
\definecolor{lightred}{RGB}{244,204,204}
\definecolor{lightyellow}{RGB}{255,242,204}
\definecolor{headerblue}{RGB}{220,230,242}

\definecolor{headerblue}{RGB}{151,78,161}
\definecolor{softblue}{RGB}{232,240,249}
\definecolor{softgreen}{RGB}{172,245,193}
\definecolor{softred}{RGB}{240,188,218}
\definecolor{softyellow}{RGB}{255,249,219}
\definecolor{rowgray}{RGB}{208,248,248}
\definecolor{softred2}{RGB}{150,198,200}

\newcommand{\yescell}{\cellcolor{softgreen}\textsf{FPT}}
\newcommand{\nocell}{\cellcolor{softred}\textsf{NP}-hard}
\newcommand{\polcell}{\cellcolor{softgreen}\textsf{Poly-time}}

\newcommand{\paracell}{\cellcolor{softred}\textsf{para-NP}-hard}
\newcommand{\egalcell}{\cellcolor{softred2}{\texttt{EGAL}: {\sf para-NP}-hard \texttt{UTIL}: {\sf W[2]}-hard, {\sf XP}}}
\newcommand{\utiltd}{\cellcolor{softyellow}{\texttt{EGAL}: {\sf para-NP}-hard \texttt{UTIL}: \FPT}}
\newcommand{\egaltw}{\cellcolor{softyellow}{\texttt{EGAL}: {\sf para-NP}-hard \texttt{UTIL}: \FPT}}

\usepackage{mathtools}
\usepackage{bbm}
\usepackage{setspace}

\usepackage{tikz}
\usetikzlibrary{arrows,positioning}

\newdimen\prevdp
\def\leftlabel#1{\noalign{\prevdp=\prevdepth
   \kern-\prevdp\nointerlineskip\vbox to0pt{\vss\hbox{\ensuremath{#1}}}\kern\prevdp}}

\usepackage{url}

\usepackage{enumerate}

\usepackage{xspace}

\newcommand{\NPC}{\ensuremath{\mathsf{NP}}\text{-complete}\xspace}
\newcommand{\NPH}{\ensuremath{\mathsf{NP}}\text{-hard}\xspace}

\newcommand{\WOH}{\ensuremath{\mathsf{W[1]}}-hard\xspace}

\newcommand{\WTH}{\ensuremath{\mathsf{W[2]}}-hard\xspace}
\newcommand{\WTC}{\ensuremath{\mathsf{W[2]}}-complete\xspace}
\newcommand{\FPT}{\ensuremath{\mathsf{FPT}}\xspace}
\newcommand{\XP}{\ensuremath{\mathsf{XP}}\xspace}

\newcommand{\tw}{\ensuremath{\mathrm{tw}}\xspace}

\let\oldlambda\lambda
\renewcommand{\lambda}{\ensuremath{\oldlambda}\xspace}
\let\oldalpha\alpha
\renewcommand{\alpha}{\ensuremath{\oldalpha}\xspace}
\let\oldDelta\Delta
\renewcommand{\Delta}{\ensuremath{\oldDelta}\xspace}

\newcommand{\OO}{\ensuremath{\mathcal O}\xspace}

\newcommand{\NB}{\ensuremath{\mathbb N}\xspace}

\usepackage{nicefrac}

\newcommand{\ignore}[1]{}

\renewcommand{\leq}{\leqslant}
\renewcommand{\geq}{\geqslant}
\renewcommand{\ge}{\geqslant}
\renewcommand{\le}{\leqslant}

\usepackage{enumitem}
\setlist[enumerate]{labelwidth=!, labelindent=0pt}

\usepackage{rotating}
\usepackage{amssymb}
\usepackage{latexsym}
\usepackage{epsfig}
\usepackage{xcolor}
\usepackage{url}
\usepackage{caption}
\usepackage{subcaption}
\usepackage{array}
\usepackage{multirow}
\usepackage{enumerate}
\usepackage{pdflscape}
\usepackage{amsmath,graphicx,algorithm,algpseudocode,amsfonts}
\usepackage{epstopdf}
\usepackage{float}
\floatstyle{ruled}
\newfloat{Algorithms}{!thb}{lop}
\floatname{Algorithms}{Algorithm}

\allowdisplaybreaks[1]

\algnewcommand\algorithmicinput{\textbf{Input:}}
\algnewcommand\INPUT{\item[\algorithmicinput]}
\algnewcommand\algorithmicoutput{\textbf{Output:}}
\algnewcommand\OUTPUT{\item[\algorithmicoutput]}
\algnewcommand{\LineComment}[1]{\State \(\triangleright\) #1}

\usepackage{pifont}

\usepackage{mathtools}

\usepackage{algorithm}
\usepackage{algpseudocode}


\usepackage{comment}

\usepackage{cleveref}
\definecolor{cblue}{RGB}{0, 0, 128}
\crefname{theorem}{Theorem}{\bf Theorems}
\crefname{observation}{Observation}{\bf Observations}
\crefname{corollary}{Corollary}{\bf Corollary}
\crefname{lemma}{Lemma}{\bf Lemmata}
\crefname{corollary}{Corollary}{\bf Corollaries}
\crefname{proposition}{Proposition}{\bf Propositions}
\crefname{definition}{Definition}{\bf Definitions}
\crefname{claim}{Claim}{\bf Claims}
\crefname{reductionrule}{Reduction rule}{\bf Reduction rules}
\usepackage{color,soul} 

\usepackage{todonotes}
\newcommand{\CEGW}{\textsc{Connected Egalitarian Graph Welfare}}
\newcommand{\CUGW}{\textsc{Connected Utilitarian Graph Welfare}}

\begin{document}
\title{Fair, Efficient and Connected Allocations on Graphs}
\titlerunning{Fair, Efficient and Connected Allocations on
Graphs}
%

%

\author{Susobhan Bandopadhyay\inst{1}\and
Anish Datta \inst{2} \and
Palash Dey \inst{2} \and
Ashlesha Hota \inst{2} \and
Abhishek Sahu \inst{3}
}

 \authorrunning{S. Bandopadhyay et al.}
 \institute{Tata Institute of Fundamental Research Mumbai, India \and
 Indian Institute of Technology Kharagpur, India \and
National Institute of Science and Educational Research Bhubaneshwar, India}
\maketitle              
\begin{abstract}

We study the classical and parameterized complexity of efficient connected allocation problems on graphs, where efficiency is measured by egalitarian and utilitarian welfare maximization. We first establish a sharp complexity dichotomy in the classical setting: both problems are \NPH in general and remain hard even on very restricted graph classes such as paths, and consequently trees and cycles. In contrast, they are polynomial-time solvable on stars, but this tractability does not extend even to the case of two disjoint stars. 
Motivated by these boundaries, we move to the parameterized complexity framework, where we study the problem with respect to the number of agents. We obtain fixed-parameter tractability (\FPT) on trees and, more generally, identify a robust phenomenon whereby tractability on a connected graph class extends to disjoint unions of graphs from that class. We further investigate the parameters treewidth and treedepth, showing that the utilitarian version is \FPT for both, whereas the egalitarian version remains {\sf para\text{-}NP}-hard even on graphs of treedepth two.
Finally, we analyze the number of connected components and show that except for the collection of stars, the problems remain hard. For the collection of stars, while we obtain {\sf para-NP}-hardness for the egalitarian case, the utilitarian case gives  {\sf W[2]}-hardness together with an {\sf XP} algorithm. 



\keywords{Fair division  \and connected allocation \and egalitarian welfare \and utilitarian welfare}
\end{abstract}

\section{Introduction}
Fair division of indivisible goods is a fundamental topic in economics, social choice, and computer science~\cite{DBLP:journals/ai/AmanatidisABFLMVW23}. It studies how to allocate indivisible goods among agents with different valuations, under various models such as binary, additive, and submodular valuations. Several notions of fairness and allocation objectives have been extensively studied.
A particularly interesting variant arises when goods are connected through an underlying structure such as a graph, matroid, or network. In many real-world settings, goods have natural dependencies or connectivity constraints. One important framework is the \emph{fair division of graphs}, where goods correspond to graph vertices and allocations must be connected. For example, when dividing land among individuals, a person would usually prefer one connected plot rather than several disconnected pieces. Similar constraints arise in task allocation and networked goods, making connectivity a natural requirement in fair allocation problems.

In the \emph{connected fair division} model, each good is represented as a vertex of a graph, and each agent must receive a connected subgraph as her bundle. This connectivity requirement makes the problem significantly harder. While fair or efficient allocations can often be found in polynomial time in the unconstrained setting, adding graph-based feasibility constraints typically makes the problem \NPH, even for simple graph classes such as paths~\cite{kawase2024contiguous}. Several prior works have studied fairness notions under connectivity constraints. For example,~\cite{bouveret2017fair} characterized the complexity of proportionality and envy-freeness on paths, trees, and stars; 
~\cite{lonc2019maximinshareallocationscycles} studied maximin share allocations on cycles; 
~\cite{igarashi2019pareto} examined Pareto-optimal allocations under connectivity constraints; 
and~\cite{BILO2022197} analyzed almost envy-free allocations on paths and related graph classes.
In contrast, the problem of computing allocations that optimize social welfare objectives such as utilitarian and egalitarian welfare remains largely unexplored, and we study this direction.

We study two classical welfare objectives in this paper: \emph{egalitarian welfare}, which seeks to maximize the minimum utility among all agents, and \emph{utilitarian welfare}, which seeks to maximize the total utility across all agents. We investigate the computational complexity of finding connected allocations that optimize these objectives. For an allocation $\mathcal{A}$, the \emph{egalitarian welfare} is defined as $\mathrm{EW}(\mathcal{A})=\min_{i\in N}u_i(A_i)$, while the \emph{utilitarian welfare} is defined as $\mathrm{UW}(\mathcal{A})=\sum_{i\in N}u_i(A_i)$. Formally, we study the following two problems.

\begin{definition}[\CUGW]
Given an undirected graph $G=(V,E)$, a set of agents $N$, additive valuation functions $u_i : V \to \mathbb{Z}_{\ge 0}$ for each $i\in N$, and a threshold $t\in\mathbb{Z}_{\ge 0}$, determine whether there exists an allocation $\mathcal{A}=(A_1,\ldots,A_n)$ of $V$ such that $(A_1,\ldots,A_n)$ forms a partition of $V$, each $G[A_i]$ is connected, and the utilitarian welfare satisfies
\(UW(\mathcal{A})=\sum_{i\in N}u_i(A_i)\ge t\).
\end{definition}

\begin{definition}[\CEGW]
Given an undirected graph $G=(V,E)$, a set of agents $N$, additive valuation functions $u_i : V \to \mathbb{Z}_{\ge 0}$ for each $i\in N$, and a threshold $t\in\mathbb{Z}_{\ge 0}$, determine whether there exists an allocation $\mathcal{A}=(A_1,\ldots,A_n)$ of $V$ such that $(A_1,\ldots,A_n)$ forms a partition of $V$, each $G[A_i]$ is connected, and the egalitarian welfare satisfies
\(EW(\mathcal{A})=\min_{i\in N}u_i(A_i)\ge t\).
\end{definition}

The interaction between these welfare notions and graph-based feasibility constraints gives rise to rich algorithmic and structural questions. 
Understanding the boundary between tractable and intractable instances across various graph classes (such as trees, paths, stars, and cliques) is crucial for a comprehensive characterization of the computational complexity of fair division with connectivity constraints.

In this work, we study the complexity of maximizing egalitarian and utilitarian welfare under connected allocations. 
We provide tight hardness and tractability results across fundamental graph classes and parameterizations, contributing to a unified understanding of the algorithmic frontier of connected fair division.

\begin{table}[t]
\centering
\small
\caption{Computational complexity of unparameterized and parameterized variants for various graph structures.}
\label{tab:complexity}

\setlength{\tabcolsep}{6pt}
\renewcommand{\arraystretch}{1.25}
\rowcolors{2}{white}{rowgray}

\begin{tabularx}{\linewidth}{
>{\raggedright\arraybackslash}X
>{\centering\arraybackslash}p{1.7cm}
>{\centering\arraybackslash}p{2.9cm}
>{\centering\arraybackslash}p{2.95cm}
}
\toprule
\rowcolor{headerblue}
\textcolor{white}{\textbf{Structure}} &
\textcolor{white}{\textbf{Classical Complexity}} &
\textcolor{white}{\textbf{Param. by \#Agents ($n$)}} &
\textcolor{white}{\textbf{Param. by \#Comp.($k$)}} \\
\midrule

Path, Tree, Cycle &
\nocell &
\yescell &
--- \\

Star &
\polcell &
\yescell &
--- \\

Collection of Paths/Trees/Cycles &
\nocell &
\yescell &
\paracell \\

Collection of Stars &
\nocell &
\yescell &
\egalcell \\

Bounded Treewidth &
--- &
\egaltw &
--- \\

Bounded Treedepth &
--- &
\utiltd &
--- \\

\bottomrule
\end{tabularx}
\end{table}

\subsection{Our contributions}

We study efficient connected allocation problems on graphs, where the following two well-studied welfare measures serve as the notion of efficiency: {\em egalitarian} welfare and {\em utilitarian} welfare. We investigate both the classical and parameterized complexity of these problems for several graph classes and natural parameters. Our main results are shown in \Cref{tab:complexity}. We now provide a high level overview of our results. 


\noindent\textbf{Classical Complexity.} These problems are \NPH in general. We establish that this hardness persists even on the simplest graph classes: paths, and consequently on trees and cycles. In contrast, we show that they are polynomial-time solvable on stars; however, this tractability does not extend to disjoint collections of stars. Motivated by these boundaries and to address the intractability, we study the parameterized complexity of the problems with respect to two natural parameters: the number of agents and the number of connected components. Notably, the latter is precisely the parameter that determines tractability for disjoint unions of stars. Across both objectives, we obtain a striking picture: while a small number of agents often leads to tractability, a small number of components alone does not suffice for fixed-parameter tractability.


\noindent\textbf{Parameterized by number of agents.} We prove that both \CEGW{} and \CUGW{} problems are fixed-parameter tractable on trees and forests when parameterized by the number of agents. While dynamic programming techniques suffice on trees, extending these algorithms to forests presents an additional challenge since connected allocations cannot span multiple components. We address this issue by observing that every feasible allocation induces a partition of the agent set according to the connected components utilized by the allocation. Our algorithm combines enumeration of these agent partitions with color-coding techniques that efficiently identify suitable component assignments. This avoids exhaustive exploration of all assignments and yields randomized \FPT algorithms, which can further be derandomized using standard perfect hash family constructions. 
 
Notably, during the design of our algorithm we observe that the techniques employed do not rely on any specific properties required for a tree. This implies a more general phenomenon: whenever a problem is \FPT when parameterized by the number of agents on a connected graph class $\mathcal{G}$, it remains \FPT on disjoint collections of graphs from $\mathcal{G}$. In contrast to the classical complexity behavior observed on stars, where tractability fails to extend to a disjoint collection, the parameterized setting   (number of agents) imposes a robust property.

We further investigate how tree-like structures affect fixed-parameter tractability. Building on our results for trees and forests, we explore two of the most natural parameters measuring similarity with a tree structure: treewidth and treedepth, obtaining contrasting results: the \CUGW{} problem is \FPT for graphs with bounded treewidth and treedepth, whereas the \CEGW{} problem remains {\sf para-NP}-hard on both graph classes. The \FPT result for the \CUGW{} problem follows from a tree-decomposition-based dynamic programming (bounded treewidth graphs); since treedepth is an upper bound on treewidth, the result extends to both parameters. In contrast, a reduction from the \textsc{Partition} problem establishes that the \CEGW{} problem is \NPH even for graphs with treedepth two, revealing a tractability gap between the two versions.

\noindent\textbf{Parameterized by number of components.}  We also study the influence of the number of connected components as a structural parameter. On all the discussed simple graph classes (excluding stars), both problems are \NPH, which immediately dismisses tractability when parameterized by the number of components. Only on star collections do we observe a more nuanced picture. For the egalitarian setting, we establish \NPH{ness} on graphs consisting of two disjoint stars through a reduction from the \textsc{Partition} problem establishing the stronger {\sf para-NP}-hardness result. For utilitarian welfare, we prove a much weaker \WTH{ness} on collections of stars via a reduction from the \textsc{Set Cover} problem. And we are able to design an \XP algorithm for the \CUGW{} problem parameterized by the number of stars ruling out the possibility of the stronger {\sf para-NP}-hardness for the utilitarian version. The algorithm for the \CUGW{} problem guesses ownership patterns of star centers and transforms the remaining optimization task into a maximum-weight bipartite matching problem.

\subsection{Related Work}
The {\sc Equitable Connected Partition} problem studied in~\cite{enciso2009makes} provides structural insights into how connectivity constraints influence the complexity of partitioning a graph into balanced connected components. In this problem, the vertices are divided into a fixed number of connected classes whose sizes differ by at most one. From a parameterized complexity viewpoint, the problem is \WOH when parameterized by the number of classes combined with treewidth, pathwidth, or feedback vertex set size (even on planar graphs), but is \FPT when parameterized by vertex cover size or by the maximum number of leaves in a spanning tree. These results highlight how specific structural parameters of the graph can critically affect tractability under connectivity requirements.

Bouveret et al.~\cite{bouveret2017fair} introduced and studied the model of fair allocation of indivisible items under connectivity constraints, where items form an undirected graph and each agent must receive a connected subgraph. Focusing on additive valuations, they analyzed classical fairness notions such as proportionality, envy-freeness, and maximin share (MMS). They established strong hardness results even for simple graph classes (e.g., \NPH{ness} of proportionality on paths and envy-freeness on paths and stars), while also providing efficient algorithms for restricted settings, such as graphs with simple structure or instances with a small number of agents or agent types. In particular, they showed that MMS allocations always exist and can be computed efficiently for acyclic graphs. Igarashi et al.~\cite{igarashi2019pareto} studied the problem of computing Pareto-optimal allocations under connectivity constraints in the item graph. They show that the problem becomes \NPH for many topologies, including trees of bounded pathwidth or maximum degree~3. They further establish that even on paths, deciding the existence of a Pareto-optimal allocation satisfying EF1 or MMS is \NPH. Nevertheless, they provide a positive result by designing a moving-knife algorithm for certain binary valuation instances with non-nested interval structure. 

Aziz et al.~\cite{aziz2022fair} studied fair division with both goods and chores, where agents may have positive or negative utilities for items, thereby generalizing the standard models that consider only goods or only chores. They investigate which classical fairness and efficiency guarantees extend to this mixed setting and identify several limitations. In addition, they design new polynomial-time algorithms for computing fair allocations and analyze the complexity of achieving various fairness and efficiency notions, highlighting gaps in the existing literature.

Further, Bil\`o et al.~\cite{BILO2022197} studied the existence of EF1 allocations under connectivity constraints in the item graph. For path graphs with monotonic utilities, they prove the existence of EF1 allocations for up to four agents and EF2 allocations for any number of agents, using discrete analogues of classical topological arguments. For identical utilities, they provide a polynomial-time algorithm to compute EF1 allocations. For two agents, they fully characterize the graphs that guarantee EF1 existence, namely those whose biconnected components form a path. Deligkas et al.~\cite{deligkas2021parameterized} provided a detailed parameterized complexity analysis under fairness notions such as proportionality, envy-freeness, EF1, and EFX. They show that tractability requires meaningful restrictions on both the structure of the item graph and the agents. In particular, they design \XP algorithms parameterized by (i) clique-width of the graph plus the number of agents and (ii) treewidth plus the number of agent types, complemented by matching lower bounds. They further show that achieving \FPT additionally requires bounding the maximum item valuation alongside stronger structural parameters.

While fair allocation under connectivity constraints has received considerable attention with respect to notions such as proportionality, envy-freeness, EF1, and EFX, comparatively little attention has been devoted to welfare-oriented objectives such as egalitarian and utilitarian optimization. In this paper, we study these objectives and investigate their computational aspects.

\section{Preliminaries}

\textbf{Notations.} We denote the set $\{1,2,\ldots\}$ of natural numbers by $\NB$. For any integer $\ell$, we denote the set $\{1,\ldots,\ell\}$ by $[\ell]$. All graphs considered in this paper are undirected and simple. For a graph $G$, we denote its vertex set and edge set by $V(G)$ and $E(G)$, respectively. For a vertex $v\in V(G)$, let $N_G(v)=\{u\mid \{u,v\}\in E(G)\}$ denote the open neighbourhood of $v$ in $G$. The degree of a vertex $v$ is denoted by $\deg_G(v)=|N_G(v)|$. A graph $G$ is called \emph{$k$-regular} if $\deg_G(v)=k$ for every vertex $v\in V(G)$. For a set $S\subseteq V(G)$, the graph induced by $S$ is denoted by $G[S]$. For a graph $G$ and a vertex $v$, let $G-v$ denote the graph $G[V(G)\setminus\{v\}]$. For a disconnected graph $G$, let $c(G)$ denote the set of connected components of $G$.

We consider a fair division setting on an undirected graph $G=(V,E)$, where the vertex set $V(G)$ represents a collection of indivisible items. Let $N=\{1,\dots,n\}$ denote a set of $n$ agents. Each agent $i\in N$ is associated with a non-negative valuation function $u_i:V(G)\rightarrow \mathbb{Z}_{\ge 0}$. Further, for any bundle $A\subseteq V$, the utility of agent $i$ for $A$ is given by $u_i(A)=\sum_{v\in A}u_i(v)$.

An allocation is an $n$-partition $\mathcal{A}=(A_1,\dots,A_n)$ of $V$, where bundle $A_i$ is assigned to agent $i$. We consider feasibility constraints on allocations: each bundle assigned to an agent must belong to a family $\mathcal{F}(G)\subseteq 2^V$. Examples include connected subsets and independent subsets of $G$. When $\mathcal{F}(G)$ is the family of connected subsets, we refer to $\mathcal{A}$ as a \emph{connected allocation}; similarly, if $\mathcal{F}(G)$ is the family of independent sets, we call it an \emph{independent allocation}.

\noindent
\textbf{Parameterized Complexity.} In parameterized complexity, an input instance of size $n$ is associated with a parameter $k$, and the goal is to design algorithms with running time $f(k)\cdot n^{\OO(1)}$, where $f$ is a computable function depending only on $k$~\cite{downey2012parameterized}. Problems admitting such algorithms are called \emph{fixed-parameter tractable} (\FPT)~\cite{cygan2015parameterized,flum2006parameterized}. 
For more details, we refer the reader to~\cite{cygan2015parameterized}.



\section{Polynomial Intractability} 
In this section, we study the classical computational complexity of the allocation problems under connectivity constraints. We establish hardness results for both egalitarian and utilitarian welfare objectives across several graph classes. Our results begin with hardness on highly restricted structures such as paths and then extend to broader graph families including trees, forests, and cycles. We also identify graph classes where efficient algorithms exist, thereby delineating the boundary between tractable and intractable cases.

We observe that the problem of finding a proportional connected allocation (PROP-CFD) is a strictly restricted special case of our egalitarian problem. In PROP-CFD, agent valuations are normalized such that \( \sum_{v \in V} u_i(v) = 1 \) for all \( i \in N \), and the target egalitarian threshold is fixed at \( t = 1/n \). 

Theorem 3.1 in \cite{bouveret2017fair} establishes that PROP-CFD is \NPC when $G$ is a path. While their hardness reduction is constructed using rational valuations, it extends to our integral definition: by scaling all valuations and the target threshold by the least common multiple of their denominators, we obtain a structurally identical instance where all values are integers. Because this uniform scaling preserves the validity of any allocation, our integral formulation strictly generalizes this \NPC problem.

\begin{corollary}
The \CEGW{} problem is \NPH, even when the graph \( G \) is a path.
\end{corollary}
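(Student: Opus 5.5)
We will obtain the corollary by a polynomial-time reduction from PROP-CFD on paths, which Theorem~3.1 of~\cite{bouveret2017fair} shows to be \NPC. The plan has three steps: scale the rational input to integers, output the scaled instance with the appropriate threshold, and check that feasible allocations and their quality are preserved.

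\textbf{Step 1 (scaling).} Take an instance of PROP-CFD consisting of a path $G$, agents $N=[n]$, and rational valuations $u_i:V(G)\to\mathbb{Q}_{\ge 0}$ with $\sum_{v\in V}u_i(v)=1$ for every $i$. Let $L$ be the least common multiple of $n$ and all denominators of the values $u_i(v)$, where each value is written in lowest terms. Since each denominator is written in binary in the input, $L$ is at most the product of polynomially many numbers of polynomial bit-length. Hence $L$ has polynomial bit-length and can be computed in polynomial time.

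\textbf{Step 2 (the instance).} Define $u'_i(v)=L\cdot u_i(v)\in\mathbb{Z}_{\ge 0}$ and $t=L/n\in\mathbb{Z}_{\ge 0}$. Output the \CEGW{} instance $(G,N,(u'_i)_{i\in N},t)$ on the same path $G$. This instance is computed in polynomial time.

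\textbf{Step 3 (equivalence).} The set of feasible allocations is identical in both instances, namely the partitions of $V$ into $n$ connected bundles. By additivity, $u'_i(A_i)=L\cdot u_i(A_i)$ for every agent $i$ and bundle $A_i$. Therefore $u_i(A_i)\ge 1/n$ holds for all $i$ if and only if $u'_i(A_i)\ge L/n=t$ for all $i$, which is exactly the condition $\mathrm{EW}(\mathcal{A})\ge t$. So the PROP-CFD instance is a yes-instance exactly when the constructed instance is.

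The main point needing care is Step~1's claim that the reduction is polynomial. We must ensure the integer $L$ does not blow up the encoding. The bit-length of the lcm is at most the sum of the bit-lengths of the denominators plus $\log n$, so this is routine.

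If one prefers not to rely on the rational formulation at all, an alternative is to note that the reduction in~\cite{bouveret2017fair} is from a strongly \NPC problem such as \textsc{3-Partition} or \textsc{X3C}. Their valuations are integer multiples of a common denominator such as $1/T$, so one could multiply by that denominator directly. We will use the generic lcm argument above, since it is independent of the details of their construction.
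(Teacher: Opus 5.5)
Your proposal is correct and follows the paper's argument. PROP-CFD on paths (Theorem~3.1 of Bouveret et al.) is the special case of \CEGW{} with normalized valuations and threshold $1/n$, and multiplying all values and the threshold by a common multiple of the denominators (you include $n$ so that $t=L/n$ is an integer) gives an equivalent integral instance on the same path; your bit-length bound on $L$ simply spells out the polynomial-time scaling that the paper asserts in one line.
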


Previous work by \cite{kawase2024contiguous} (Theorem~12) established that the \CUGW{} problem is \NPH even on paths.

\begin{corollary}
Both the \CUGW{} and \CEGW{} problems are \NPH when the graph \( G \) is a tree, a linear forest (collection of paths), or a forest (collection of trees).
\end{corollary}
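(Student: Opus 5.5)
The plan is to derive this from the path hardness results already stated, namely the preceding corollary for \CEGW{} and Theorem~12 of \cite{kawase2024contiguous} for \CUGW{}. The point is that a path belongs to each of the three classes in the statement:
\begin{itemize}
\item it is a tree, being connected and acyclic;
\item it is a linear forest, being a collection of paths with exactly one component;
\item it is a forest, being a collection of trees with exactly one component.
\end{itemize}

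Given a hard instance of either problem on a path $P$, I take the identical instance $(G=P, N, (u_i)_{i\in N}, t)$ as the instance for each of these classes. This is the identity map, so it is trivially polynomial-time computable and preserves yes/no answers. Hence any polynomial-time algorithm for \CUGW{} or \CEGW{} restricted to trees, linear forests, or forests would solve the path case in polynomial time, which is \NPH. So every class containing all paths inherits \NPH{}ness.

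If one prefers a genuinely disconnected forest instance, there is a second route. Take the path instance and add isolated vertices valued $0$ by every agent, together with one extra agent per isolated vertex. These dummy agents must receive the isolated vertices, since the allocation is a partition and components cannot be shared.

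This padding behaves differently for the two objectives:
\begin{itemize}
\item For \CUGW{}, the dummy agents contribute $0$, so the padding is harmless.
\item For \CEGW{} with $t>0$, the dummy agents would violate the threshold. There I would instead value the isolated vertex at $t$ for its dummy agent and $0$ for all others.
\end{itemize}

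This extension is optional, because the statement only requires membership of paths in each class. I do not expect any real obstacle. The only subtlety is noting that the integral-scaling argument behind the \CEGW{} path hardness carries over verbatim, since the instance is unchanged.
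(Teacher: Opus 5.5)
Your proposal is correct and takes the paper's approach: the paper gives no separate proof, treating the corollary as immediate from the path hardness results (the preceding corollary for \CEGW{} and Theorem~12 of \cite{kawase2024contiguous} for \CUGW{}), since a path is itself a tree, a linear forest, and a forest. Your padding construction is unnecessary, and for \CUGW{} the dummy agents are not actually forced onto the isolated vertices, though merging any zero-valued path pieces held by dummies into an adjacent bundle shows the optimum is unchanged.
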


Hardness for cycles reduces from paths. Given an instance on a path \( P \) with agents \( N \), we construct a cycle \( C \) by adding a single dummy vertex \( v^* \) connected to the two endpoints of \( P \). We introduce one additional \emph{blocker agent} \( B \) who values only \(v^*\). Also, \(v^*\) is valued only by \(B\). We set its valuation for \(v^*\) as:
\begin{itemize}
    \item \textbf{Egalitarian:} Set \( u_B(v^*) = t \), where \( t \) is the target threshold of the original path instance. Define the new threshold \(t'=t\) for the cycle instance.
    \item \textbf{Utilitarian:} Set \( u_B(v^*) = M \), where \( M = 1 + \sum_{i \in N} \sum_{v \in V(P)} u_i(v) \). Define the new threshold \(t'=t+M\) for the cycle instance.
\end{itemize}
In both settings, any valid solution (achieving threshold \( t \) or maximizing total welfare) strictly requires allocating \( v^* \) exclusively to agent \( B \). Because bundles must be connected, agents in \( N \) cannot wrap around the cycle and are restricted to valid connected bundles within the original path \( P \). Thus, we have the following lemma:

\begin{lemma}
Both the \CEGW{} and \CUGW{} problems are \NPH when the graph \(G\) is a cycle or a collection of disjoint cycles.
\end{lemma}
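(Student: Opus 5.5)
We reduce from the path case, using the \NPH{}ness on paths established above for both \CEGW{} (via PROP-CFD) and \CUGW{} (via~\cite{kawase2024contiguous}). Given a path instance $(P, N, (u_i)_{i\in N}, t)$ with endpoints $a$ and $b$, we build the cycle $C$ by adding a fresh vertex $v^*$ adjacent to $a$ and $b$. We also add a blocker agent $B$ with $u_B(v)=0$ for all $v\in V(P)$, and set $u_i(v^*)=0$ for every $i\in N$. The value $u_B(v^*)$ and the threshold $t'$ are chosen as described before the statement. This is clearly polynomial. For a collection of disjoint cycles, a single cycle is already such a collection, so hardness transfers immediately.

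\textbf{Forward direction.} Take a connected allocation $(A_i)_{i\in N}$ of $P$ meeting the threshold $t$, and set $A_B=\{v^*\}$. Every bundle stays connected in $C$, since $P$ is an induced subgraph of $C$ and $\{v^*\}$ is a single vertex.
\begin{itemize}
\item In the egalitarian case, each agent of $N$ keeps utility at least $t$, and $u_B(A_B)=t$, so $EW\ge t=t'$.
\item In the utilitarian case, $UW = UW_P + M \ge t+M$.
\end{itemize}

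\textbf{Converse direction.} This is the main obstacle: we must show that some optimal allocation gives $B$ exactly $\{v^*\}$.
\begin{itemize}
\item \emph{Utilitarian case.} $B$ values only $v^*$, and nobody else values $v^*$. If $v^*\notin A_B$, then $UW\le M-1<t+M$; note this also covers $t=0$, where the inequality $M-1 < M$ still holds. Hence $v^*\in A_B$.
\item \emph{Egalitarian case.} Assume $t\ge 1$ (for $t=0$ the instance is trivial). Then $B$ needs utility at least $t$, which forces $v^*\in A_B$.
\end{itemize}

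So in both cases $v^*\in A_B$, but $A_B$ might also contain path vertices near $a$ or $b$. Removing those path vertices from $A_B$ and handing them to neighbouring bundles does not lower $B$'s utility, because $B$ values them at zero. We handle this as follows. $A_B$ is a connected arc of $C$ through $v^*$, so $A_B\setminus\{v^*\}$ consists of a prefix $a,\dots$ and a suffix $\dots,b$ of $P$. Reassign the prefix to the agent owning the first vertex after it, and the suffix to the agent owning the last vertex before it. These bundles remain connected arcs of $P$, and utilities of agents in $N$ do not decrease since valuations are nonnegative. In the utilitarian case, $B$ loses nothing, since it values those vertices at zero.

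One edge case needs care: every $A_i$ with $i\in N$ could be empty, so that $A_B$ covers all of $C$. In the egalitarian setting, empty bundles then have utility $0<t$, which is impossible. In the utilitarian setting, empty bundles are allowed, and we simply give all of $P$ to one agent.

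After this cleanup, $A_B=\{v^*\}$, and no bundle of an agent in $N$ contains $v^*$. A connected subset of $C-v^*=P$ is a subpath, so restricting the allocation to $P$ gives a connected allocation of the original instance. Its egalitarian welfare is at least $t$, respectively its utilitarian welfare is at least $t'-M=t$. This proves the equivalence, and hence \NPH{}ness on cycles and on disjoint unions of cycles.
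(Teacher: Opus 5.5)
Your proof is correct and follows the paper's route: the same dummy vertex $v^*$ closing the path into a cycle, and the same blocker agent $B$ valuing only $v^*$ (at $t$, resp.\ $M$, with thresholds $t$, resp.\ $t+M$). Your cleanup step, which hands zero-valued path vertices in $A_B$ to the neighbouring bundles, together with your handling of the degenerate cases, makes rigorous the point the paper glosses over when it asserts that $v^*$ must be allocated \emph{exclusively} to $B$.
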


For star graphs, we can show that both problems are polynomial time solvable using the algorithm described in Theorem~3.2 of \cite{bouveret2017fair}. It gives a perfect bipartite matching algorithm to minimize cost to the agent who is allotted the star center. 

\begin{corollary}
Both the \CEGW{} and \CUGW{} problems are polynomial-time solvable when the  graph \(G\) is a star.
\end{corollary}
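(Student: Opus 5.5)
We prove both parts of the corollary with one scheme: guess who owns the star center, then settle the rest of the allocation by bipartite matching. Let $G$ be a star with center $c$ and leaves $\ell_1,\dots,\ell_m$.

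\emph{Structure of connected allocations.} A connected bundle either contains $c$ or is a single leaf, since two leaves are joined only through $c$. Every agent has nonnegative valuations, so we may assume every bundle is nonempty. (If $n>m+1$, no allocation gives each agent a nonempty bundle; the utilitarian answer is then decided by whichever convention the model uses for empty bundles, and for the egalitarian version the answer is trivial whenever $t>0$.) Hence a connected allocation looks like this: one agent $j$ receives $c$ together with some set $S$ of leaves, and every other agent receives exactly one leaf. In particular $n-1\le m$. We try all $n$ choices of the center owner $j$ and solve each resulting subproblem in polynomial time.

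\emph{Utilitarian.} Fix $j$. We must choose an injective assignment $\sigma$ of the agents $N\setminus\{j\}$ to leaves; agent $j$ then receives $c$ and all unassigned leaves. The welfare of this allocation is
\[
u_j(c)+u_j(V\setminus\{c\})+\sum_{i\ne j}\bigl(u_i(\sigma(i))-u_j(\sigma(i))\bigr).
\]
So we need a maximum-weight matching that saturates $N\setminus\{j\}$ in the complete bipartite graph between $N\setminus\{j\}$ and the leaves, where edge $(i,\ell)$ has weight $u_i(\ell)-u_j(\ell)$. Weights may be negative, and every agent other than $j$ must be matched. To handle this, we add a large constant to every edge weight, or equivalently pad the instance and run the Hungarian algorithm for a maximum-weight matching of the required size. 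This is the assignment problem and runs in polynomial time. We take the best of the $n$ values and compare it with $t$.

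\emph{Egalitarian.} Fix $j$. Each agent $i\ne j$ must get a leaf $\ell$ with $u_i(\ell)\ge t$. Agent $j$ needs $u_j(c)+u_j(\text{unassigned leaves})\ge t$, which is the same as requiring $\sum_{i\ne j}u_j(\sigma(i))\le u_j(V)-t$. We build the bipartite graph containing only the allowed edges $u_i(\ell)\ge t$, give edge $(i,\ell)$ cost $u_j(\ell)$, and compute a minimum-cost matching saturating $N\setminus\{j\}$, again as an assignment problem. The instance is a yes-instance exactly when, for some $j$, such a matching exists and its cost is at most $u_j(V)-t$. This is the ``perfect matching minimizing cost to the center owner'' idea of Theorem~3.2 in \cite{bouveret2017fair}, with threshold $t$ in place of $1/n$.

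\emph{Main obstacle.} The part that needs care is the egalitarian reduction. We must verify that the two constraints really separate: the non-center agents face only edge-feasibility constraints, and agent $j$'s constraint is a single linear cost bound that a min-cost matching captures exactly. With that checked, correctness in both cases follows from the structural characterization above, and the total running time is $n$ times the cost of one assignment computation, which is polynomial.
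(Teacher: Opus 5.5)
Your scheme (guess the owner of the center, then solve an assignment problem on the leaves) is the paper's. The paper simply invokes the center-owner/matching algorithm of Theorem~3.2 in \cite{bouveret2017fair}. Your egalitarian part is exactly that algorithm with $t$ in place of $1/n$, and it is correct. For $t>0$ every bundle must be nonempty, so each non-center agent receives exactly one leaf, and the center owner's constraint is the single cost bound you state.

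The gap is in the utilitarian part, at the sentence ``every agent has nonnegative valuations, so we may assume every bundle is nonempty.'' Nonnegativity does not give this for the utilitarian objective. Handing a leaf to an empty-handed agent takes it from the center owner and can lower the total. Suppose one agent values every vertex at $1$ and the other $n-1\ge 1$ agents value everything at $0$. With empty bundles allowed, the optimum is $|V|$. Any allocation giving every agent a nonempty bundle has welfare at most $|V|-(n-1)$, and that is all your saturating matching can find.

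So your algorithm is right only if nonempty bundles are part of the definition, in which case nonemptiness is a requirement, not an assumption. The paper's definition is ambiguous here, but its own \XP algorithm for unions of stars leaves unmatched agents with nothing, i.e., it allows empty bundles. Under that convention the fix is to drop saturation. For each center owner $j$, compute an ordinary maximum-weight matching between $N\setminus\{j\}$ and the leaves using only edges of positive weight $u_i(\ell)-u_j(\ell)$, and give unmatched agents the empty bundle; this is the paper's \XP algorithm with $k=1$. It also settles the case $n>m+1$, which under this convention is not ``decided by convention'' but is solved by exactly the same matching.
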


However, for collection of stars, we show in Section~\ref{sec:para_no_comp} this approach extends only for the utilitarian setting. For egalitarian welfare, it is \NPH for just 2 disjoint stars.
\section{Parameterization by Number of Agents }
\label{sec:no_of_agents}

In this section, we study both \CEGW{} and \CUGW{} parameterized by the number of agents \(n\). We design \FPT on several restricted graph classes and then extend these results to broader graph families. 

A recurring challenge is that many graph classes of interest naturally appear as disconnected collections of simpler structures. Rather than designing separate algorithms for each such graph family, we first establish a generic lifting result. The following meta-theorem shows that if \CEGW{} and \CUGW{} admit an \FPT algorithm on a single graph from a family \(\mathscr{F}\), then the same algorithms can be extended to disjoint unions of graphs from \(\mathscr{F}\) with only an additional parameter-dependent overhead.

\begin{theorem}\label{thm:meta-theorem}
Let $\mathscr{F}$ be the family of graphs where both the \CEGW{} and \CUGW{} can be solved in time $\OO(f(n)\cdot|V|^{\OO(1)})$, then both these problems can be solved on disjoint collection of graphs from $\mathscr{F}$ with an additional multiplicative overhead of $\mathcal{O}^{\star}(2^{\OO(n\log n)})$.
\end{theorem}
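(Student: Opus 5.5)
Let $G$ be the disjoint union of connected components $C_1,\dots,C_k$, each belonging to $\mathscr{F}$. Every bundle is connected and nonempty, so in any feasible connected allocation each bundle lies inside a single component. Moreover, every component must be covered completely by the bundles placed in it, so every component receives at least one agent. In particular, if $k>n$ the instance is infeasible and we reject immediately.

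A feasible allocation therefore induces a map $\phi\colon N\to[k]$ that is surjective onto $[k]$. Equivalently, it induces a partition $\mathcal{P}=\{N_1,\dots,N_k\}$ of $N$ into $k$ nonempty blocks, together with a bijection between blocks and components. The objective decomposes accordingly:
\begin{itemize}
\item for \CUGW{}, the optimal value is the maximum, over such pairs, of $\sum_{j} \mathrm{OPT}_U(C_{\pi(j)},N_j)$;
\item for \CEGW{}, the optimal value is the maximum, over such pairs, of $\min_j \mathrm{OPT}_E(C_{\pi(j)},N_j)$.
\end{itemize}
Here $\mathrm{OPT}(C,S)$ denotes the optimum for the instance restricted to graph $C$ and agent set $S$. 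Since $|S|\le n$, the assumed algorithm computes each such value in time $\OO(f(n)\,|V(C)|^{\OO(1)})$. We can obtain optimal values rather than decisions by binary search over thresholds, whose range is polynomially bounded in the input encoding. For \CEGW{} we can instead keep the decision version with the original threshold $t$ throughout, since a $\min \ge t$ condition splits into per-component conditions.

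The plan is to enumerate all partitions $\mathcal{P}$ of $N$ into exactly $k\le n$ nonempty blocks. There are at most the Bell number $B_n \le n^n = 2^{\OO(n\log n)}$ of them. For a fixed partition, we must assign blocks to components bijectively, and trying all $k!$ bijections is also within $2^{\OO(n\log n)}$. Since $k\le n$, this enumeration alone gives $2^{\OO(n\log n)}$ pairs.

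The subtle cost is the number of oracle calls, because the work must be $2^{\OO(n\log n)}\cdot f(n)\cdot|V|^{\OO(1)}$ overall. To control it, we precompute $\mathrm{OPT}(C_i,S)$ for every component $C_i$ and every nonempty subset $S\subseteq N$. This takes $k\cdot 2^n$ calls, costing $2^n f(n)\sum_i |V(C_i)|^{\OO(1)} \le 2^n f(n)|V|^{\OO(1)}$.

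Afterwards we need no explicit bijection enumeration; a subset dynamic program suffices. Define $D[i][S]$ as the best value obtainable by allocating the first $i$ components using exactly the agent set $S$, with each component receiving a nonempty part of $S$. For utilitarian welfare,
\[
D[i][S]=\max_{\emptyset\neq T\subseteq S} \bigl(D[i-1][S\setminus T]+\mathrm{OPT}_U(C_i,T)\bigr),
\]
and for egalitarian welfare we replace $+$ by $\min$. The base case is $D[0][\emptyset]=0$ for utilitarian and $+\infty$ for egalitarian, with $D[0][S]=-\infty$ for $S\neq\emptyset$. The answer is $D[k][N]$ compared against $t$. This runs in $\OO(k\cdot 3^n)$ time beyond the oracle calls, which is well within the claimed overhead of $2^{\OO(n\log n)}$. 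An explicit enumeration of partitions and bijections (or color-coding over components) also fits that bound and matches the statement.

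The main point to verify carefully is correctness of the decomposition. One direction: any feasible global allocation restricts to feasible allocations on each component, for the agents assigned there, each of which covers that component. Conversely, combining per-component allocations for disjoint agent sets yields a global connected partition of $V$. A degenerate case is a component with no agent assigned; it cannot be covered, which is exactly why blocks must be nonempty and why $k>n$ is rejected. Finally, $\mathrm{OPT}(C_i,T)$ must be computed by the given algorithm on the instance $(C_i,T)$ with restricted valuations; such instances have at most $n$ agents and a graph in $\mathscr{F}$, so the hypothesis applies.
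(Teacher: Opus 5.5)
Your proof is correct, but it takes a different and tighter route than the paper.

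\textbf{The paper's route.} It enumerates ordered partitions $(S_1,\dots,S_x)$ of $N$ for every $1\le x\le n$ and randomly colors the components with $x$ colors. For each block $S_j$ it runs the single-component algorithm on every component of color $j$ and keeps the best one. It then derandomizes via perfect hash families, for a $2^{\mathcal{O}(n\log n)}$ overhead.

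\textbf{Your route.} You use that the allocation must partition all of $V$, so every component receives at least one agent. This forces $k\le n$ and exactly $k$ blocks in bijection with the components. After that, $k\cdot 2^n$ oracle calls and an $\mathcal{O}(k\,3^n)$ subset dynamic program finish the job deterministically, with only $2^{\mathcal{O}(n)}$ overhead.

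\textbf{What your covering observation buys.} The paper's argument leaves this point implicit. As written, it admits $x<k$ and uses only one component per color class, so the vertices of the remaining components would go unallocated. For instance, with $x=1$ all agents are placed in the single best component and every other component is left empty. Restricting to $x=k$ and bijective colorings repairs this. But then color-coding offers no advantage over directly trying the $k!\le n!$ bijections, or over your DP.

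\textbf{Decision versus optimization.} You also make explicit how to get optimal values from a decision oracle. The paper's step ``compute the optimal welfare achievable inside $G$'' silently needs this. Two small precisions there:
\begin{itemize}
\item What is polynomially bounded is the bit-length of the welfare range, and hence the number of binary-search steps, not the range itself.
\item You should use $\max_{s\le n} f(s)$ in place of $f(n)$ unless $f$ is monotone.
\end{itemize}
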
 

\begin{proof}
Let $G=(V,E)$ be a collection of disjoint graphs from $\mathscr{F}$, with connected components 
$G_1,\dots,G_k $. We consider $\mathcal{F}(G)=\mathcal{F}_{\mathrm{conn}}(G)$. We show below how to extend an algorithm for a single member of $\mathscr{F}$ to a collection of graphs from $\mathscr{F}$, with an additional $\FPT(n)$ overhead.

Note that, since $G$ is a disjoint collection $G_1,\dots,G_k $, every feasible allocation 
$\mathcal{A}=(A_1,\dots,A_n)$ has the property that each bundle $A_i$ is contained entirely inside a single component $G_i$. Otherwise, $G[A_i]$ would be disconnected, contradicting feasibility. Thus, any feasible allocation induces a partition of the agent set $N=\{1,\dots,n\}$ into groups, where each group corresponds to agents served by the same component.

\noindent
\textbf{Enumerating Agent Partitions.}
We enumerate all ordered partitions of the agent set $\mathcal{P}=(S_1,\dots,S_x)$,
where the sets $S_1,\dots,S_x$ are nonempty and pairwise disjoint, and 
$1 \le x \le n$. The number of unordered partitions is the $n$-th Bell number, which is at most $2^{\OO(n\log n)}$. Applying ordering just adds a \(x! \le n!\) factor, and the total number of ordered partitions remains $2^{\OO(n\log n)}$. Since $n$ is the parameter, this enumeration depends only on $n$. Fix one such partition $\mathcal{P}$.
Next, we apply the {\em color-coding} technique~\cite{AlonYusterZwick1995} that helps us identify the best possible allocations respecting such a partition.  

\noindent
\textbf{Color coding.}
For a feasible allocation consistent with $\mathcal{P}$, we must choose $x$ distinct graph components
$G_{i}^1,\dots,G_{i}^{x}$ such that block $S_j$ is allocated entirely inside $G_{i}^{j}$. Instead of trying all $k^x$ assignments, we apply color-coding.
We randomly color each graph $G_\ell$ independently with one of the $x$ colors.
We call a coloring \emph{good} if there exists an assignment (in an optimal solution respecting the partitioning constraint) of $x$ 
distinct graphs $G_{i}^{1},\dots,G_{i}^{x}$ to the blocks $S_1,\dots,S_x$
such that $G_{i}^{j}$ receives color $j$ for every $j \in \{1,\dots,x\}$.
If such an assignment exists, then a random coloring is good with probability
at least $\frac{1}{x^x}$. In the following, we condition on a good coloring.

\noindent
\textbf{Solving colored instances.}
For each block $S_j$ and for each graph $G$ of color $j$, we compute the optimal welfare achievable inside $G$ when allocating only among agents in $S_j$. 
For this purpose, we invoke the $\OO(f(n)\cdot |V|^{\OO(1)})$ algorithm designed for a single component $G_i$.

\noindent
\textbf{Welfare.}
For each block $S_j$, we select the graph of color $j$ that maximizes the utilitarian/egalitarian welfare.

\noindent
\textbf{Running Time.}
We enumerate all partitions of the agent set, which contributes a factor $2^{\OO(n\log n)}$. 
For each partition with $x$ blocks, we repeat the color-coding procedure $x^x \le n^n$ times, contributing a factor $2^{\OO(n\log n)}$. Since for each component we need to spend $\OO(f(n)\cdot |V|^{\OO(1)})$, total time taken to solve the problem in $G$ is $\OO(2^{\OO(n\log n)} \cdot f(n)\cdot |V|^{\OO(1)})$

\noindent\textbf{Derandomization.} The randomization used in the color-coding step can be removed by employing a family of $x$-perfect hash functions (universal hash families) as in~\cite{AlonYusterZwick1995}. 
Such a family of size $2^{\OO(x)} k^{\log k}$ can be constructed in time $2^{\OO(x)} x^{\log x} \mathrm{poly}(k)$, and since $x \le n$, this preserves the running time $2^{\OO(n\log n)} \cdot \mathrm{poly}(|V|)$. Hence, the algorithm can be derandomized without affecting fixed-parameter tractability.
\qed \end{proof}

\subsection{FPT Algorithms for Trees and Forests}
Here we show that the \CEGW{} and \CUGW{} both are \FPT parameterized by the number of the agents using dynamic programming based approaches. Finally, using Theorem~
\ref{thm:meta-theorem}, we show that both these problems are \FPT for the same parameter. This also gives that these two problems are \FPT parameterized by the number of agents on paths and disjoint union of paths. 

\begin{theorem}\label{thm:egal-tree-fpt-n}
The \CEGW{} problem is \FPT parameterized by the number of agents~\(n\), when the graph is a tree.
\end{theorem}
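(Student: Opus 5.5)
Our plan is a bottom-up dynamic program over a rooted version of the tree $T$. The state records which agents' bundles are already closed inside a subtree, and which agent owns the root of that subtree. Root $T$ at an arbitrary vertex $r$; for $v\in V(T)$ let $T_v$ be the subtree rooted at $v$. In any connected allocation restricted to $T_v$, every bundle $A_i\cap V(T_v)$ is either empty, or a connected subtree of $T_v$. Moreover, at most one bundle, the one containing $v$, can continue upward through the edge from $v$ to its parent. All other nonempty bundles meeting $T_v$ must lie entirely inside $T_v$. Each such bundle has a unique topmost vertex, and that vertex is either $v$ (for the owner of $v$) or a vertex whose parent belongs to a different agent.

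First we reduce the egalitarian objective to a feasibility question. Given the threshold $t$, define a Boolean table $D[v,a,S]$ for $a\in N$ and $S\subseteq N\setminus\{a\}$. It is true iff the vertices of $T_v$ can be partitioned so that $v$ is owned by $a$, and the bundle of $a$ restricted to $T_v$ is connected and contains $v$. In addition, exactly the agents of $S$ have bundles completely contained in $T_v$, each connected and each of value at least $t$, and no other agent receives a vertex of $T_v$. The open bundle of $a$ has not yet been checked against $t$, so we also carry its accumulated value. To keep this independent of the magnitude of the valuations, we store in $D[v,a,S]$ not a Boolean but the maximum value $u_a(A_a\cap V(T_v))$ achievable, or $-\infty$ if infeasible. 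This is a legitimate exchange argument: for fixed $(v,a,S)$ the remainder of the tree only cares about the closed set $S$ and how much $a$ has collected so far, and more is always better. The table has $|V|\cdot n\cdot 2^{n}$ entries.

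Next come the transitions. At a leaf, $D[v,a,\emptyset]=u_a(v)$ and all other entries are $-\infty$. For an internal $v$ with children $c_1,\dots,c_d$, we process the children one at a time, maintaining a partial table $P_j[S]$ equal to the best value of $a$ after the first $j$ children, starting with $P_0[\emptyset]=u_a(v)$. For child $c_j$ there are two options. In the first, the child is owned by $a$, so we combine $P_{j-1}[S_1]$ with $D[c_j,a,S_2]$, with $S_1\cap S_2=\emptyset$, and the values add. In the second, the child is owned by some $b\neq a$; then the bundle of $b$ closes at $c_j$. This requires $D[c_j,b,S_2]\ge t$, contributes $S_2\cup\{b\}$ to the closed set, and adds nothing to $a$'s value. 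The disjointness of the closed sets guarantees that each agent gets only one bundle. Each child merge is a subset convolution over disjoint pairs, costing $3^n\cdot n^{\OO(1)}$, so the total time is $3^n n^{\OO(1)}|V|$. The answer is yes iff some $a$ has $D[r,a,N\setminus\{a\}]\ge t$. This is the point where we use the fact that every agent must receive a nonempty bundle, as the definition's $n$-partition demands. If empty bundles were allowed, the same DP would work with the final set only required to be a subset of $N\setminus\{a\}$, together with the condition $t\le 0$ for the missing agents.

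The main obstacle is correctness of the state compression. We must argue that every connected partition decomposes uniquely along the rooted structure described above; that is, each bundle is a subtree with a unique top vertex, and children owned by another agent start a new bundle that is closed within that child's subtree. We must also argue the converse, that every combination the DP assembles yields a genuine connected partition. The first direction follows from the fact that a connected vertex set in a tree induces a subtree whose top vertex is unique. For the second, since the same agent $a$ is propagated along parent–child edges, $a$'s pieces are glued through $v$, and closed bundles are disjoint by the subset constraint. Storing the maximum value of $a$ instead of all values is justified by monotonicity of the final check $\ge t$.
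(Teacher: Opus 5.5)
Your proof is correct and follows essentially the same route as the paper. Your table $D[v,a,S]$ (with $a$ the owner of $v$, $S$ the agents whose bundles are closed inside $T_v$ with value at least $t$, and the stored quantity the maximum value collected by $a$) and your final test $D[r,a,N\setminus\{a\}]\ge t$ are exactly the paper's $DP[v,i,S]$ and its final check. The one difference is the merge at a vertex: the paper enumerates partitions of $S$ into parts for the children and assigns parts to children via a maximum-weight perfect bipartite matching, whereas you add the children one at a time with a disjoint-subset $(\max,+)$ convolution, which avoids the matching step and gives the explicit bound $3^n\cdot n^{\OO(1)}\cdot|V|$.
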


\begin{proof}
The construction follows the dynamic programming framework of Theorem~3.4 in~\cite{bouveret2017fair},  
generalized to handle a target welfare threshold \( t \) instead of a fixed proportionality constant \( \frac{1}{n} \).

Root \(G\) at an arbitrary vertex \(r\).  
For \(v\in V\), let \(D(v)\) denote the subtree rooted at \(v\) and \(C(v)\) its children.  
Let \(N=\{1,\dots,n\}\) be the set of agents. 
For every vertex \(v\), agent \(i\), and set \(S\subseteq N\setminus\{i\}\) define
\[
\begin{aligned}
DP[v,i,S] &= \max\Bigl\{\, u_i(A_i) : 
  \mathcal{A} \text{ is a connected allocation of } D(v),\\
&\hspace{2.7cm}\text{the component of }\mathcal{A}\text{ containing }v\text{ is assigned to }i,\\
&\hspace{2.7cm}\text{and } u_j(A_j)\ge t\ \forall j\in S \Bigr\},
\end{aligned}
\]
and set \(DP[v,i,S] = -\infty\) if no such allocation exists.
If \(S=\emptyset\), then \(DP[v,i,\emptyset] = u_i(D(v))\). The goal is to check if \(DP[r, i, N \setminus i] \geq t\) for any agent \(i\). We take a bottom-up approach from leaf to root.

Let \(v\) have the set of children \(C(v) = \{z_1, \dots, z_d\}\).  
We enumerate all partitions \(\mathcal{P}\) of \(S\) into \(d\) (possibly empty) parts; say \(\mathcal{P} = \{P_1, \dots, P_d\}\). We now check if such a partition of agents to the subtrees rooted at the children is feasible.
For each partition \(\mathcal{P}\), construct the bipartite graph \(H_{\mathcal{P}} = (Z, Z', L)\) as follows:
\(
Z = \mathcal{P}, Z' = C(v).
\)
For every non-empty \(P \in \mathcal{P}\) and child \(z \in C(v)\), include edge \(\{P, z\}\in L\) iff the agents in \(P\) can be feasibly allocated within \(D(z)\): either agent \(i\) takes child \(z\), i.e.
\(
DP[z,i,P] \neq -\infty
\)
or 
\(\exists\, j\in P \text{ with } DP[z,j,P\setminus\{j\}] \ge t.
\)
When \(\{P, z\}\in L\), we set its weight
\(w(P,z) = DP[z,i,P]\)
if 
\(DP[z,i,P]\neq -\infty\) and 0 otherwise. The edge weight symbolizes contribution to the DP value we are computing at vertex \(v\).
For every empty \(P\in \mathcal{P}\) and every child \(z\in C(v)\), include edge \(\{P, z\}\) with weight
\(w(P,z) = u_i(D(z)).\) This represents agent \(i\) having to occupy the entire child's subtree. 

Note that for all edges in the bipartite graph \(H_{\mathcal{P}}\), their weights represent contribution to agent \(i\)'s utility.
If \(H_{\mathcal{P}}\) admits no perfect matching, define \(W(\mathcal{P}) = -\infty\).  
Otherwise, let \(W(\mathcal{P})\) be the maximum weight of a perfect matching in \(H_{\mathcal{P}}\). 
Finally, we have
\(
DP[v,i,S] = \max_{\mathcal{P}} \bigl( u_i(v) + W(\mathcal{P}) \bigr),
\)
where the maximum ranges over all partitions \(\mathcal{P}\) of \(S\) into at most \(|C(v)|\) parts.

The instance admits an allocation with minimum utility at least \(t\) iff \(\exists i\in N\) with
\(
DP[r,i,N\setminus\{i\}] \ge t.
\)

There are \(\OO(m \cdot n \cdot 2^n)\) DP states.  
For each state, the number of partitions of \(S\) depends only on \(n\), bipartite graph construction takes polynomial time in \(|\mathcal{P}|\), and computing a maximum-weight perfect matching for each partition takes polynomial time in \(|C(v)|\).  
Hence, the overall running time is \(f(n)\cdot (n+\mathrm{poly}(m))\) for some exponential function \(f\), proving fixed-parameter tractability with respect to \(n\).
\qed \end{proof}

\begin{theorem}\label{thm:util-tree-fpt-n}
The \CUGW{} problem is \FPT parameterized by the number of agents~\(n\), when the graph is a tree.
\end{theorem}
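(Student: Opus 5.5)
We adapt the dynamic program of Theorem~\ref{thm:egal-tree-fpt-n}, replacing the per-agent threshold constraint by an accumulated sum of utilities. Root the tree $G$ at an arbitrary vertex $r$, and let $D(v)$ and $C(v)$ be as before. For every vertex $v$, agent $i$, and set $S\subseteq N\setminus\{i\}$, let $DP[v,i,S]$ be the maximum of $\sum_{j\in S\cup\{i\}} u_j(A_j)$. The maximum ranges over all connected allocations of $D(v)$ in which the bundle containing $v$ goes to $i$ and every agent of $S$ receives exactly one nonempty bundle lying wholly inside $D(v)$. If no such allocation exists, $DP[v,i,S]=-\infty$. Here $A_i$ denotes only the part of $i$'s bundle inside $D(v)$.

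We require bundles of agents in $S$ to be nonempty. This is harmless: if $|V|\ge n$, any allocation with empty bundles can be modified, without decreasing welfare under nonnegative valuations, into one where every agent gets a vertex. If $|V|<n$, we allow empty bundles and simply let the unused agents contribute $0$. The answer is $\max_i DP[r,i,N\setminus\{i\}]\ge t$. For a leaf $v$, $DP[v,i,\emptyset]=u_i(v)$ and $DP[v,i,S]=-\infty$ for $S\neq\emptyset$.

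\textbf{Recurrence.} Let $v$ have children $z_1,\dots,z_d$. We enumerate the partitions $\mathcal{P}$ of $S$ into nonempty blocks; there are at most $n$ blocks, so only $2^{\OO(n\log n)}$ partitions. Each block $P$ must be placed into a distinct child subtree. Every child not receiving a block is taken entirely by $i$, contributing $u_i(D(z))$.

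A child $z$ receiving block $P$ can be handled in two ways. Either $i$ continues into $z$, with value $DP[z,i,P]$. Or the edge $vz$ is cut, so some $j\in P$ owns $z$, with value $\max_{j\in P} DP[z,j,P\setminus\{j\}]$. Let $w(P,z)$ be the maximum of these two options. The $i$-continuation is valid because $D(z)$ then connects to $v$ through $i$.

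This yields a bipartite graph with blocks on one side and children on the other. We must choose an injective assignment of blocks to children maximizing
\[
\sum_{P} \bigl(w(P,z_P)-u_i(D(z_P))\bigr)+\sum_{z\in C(v)} u_i(D(z)).
\]
This is a maximum-weight matching that saturates the block side, computable in polynomial time, for instance by padding the block side with dummy blocks of weight $u_i(D(z))$ and taking a maximum-weight perfect matching, exactly as in the proof of Theorem~\ref{thm:egal-tree-fpt-n}. Then $DP[v,i,S]=u_i(v)+\max_{\mathcal{P}}W(\mathcal{P})$.

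\textbf{Correctness and running time.} The main point to verify is correctness, namely that the decomposition is exhaustive and sound. Given an optimal allocation of $D(v)$, the agents of $S$ split by the child subtree containing their bundle; each bundle lies in one subtree because it avoids $v$ and is connected. This gives the partition $\mathcal{P}$ and the assignment. Within each used child, the restricted allocation is feasible for one of the two options, and utilities add up.

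Conversely, combining subsolutions yields a connected allocation: $i$'s pieces all attach to $v$, and the other bundles stay inside their subtrees. Since the objective is a sum, optimal substructure holds, unlike in the egalitarian case where thresholds were needed.

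There are $\OO(|V|\cdot n\cdot 2^n)$ states, each processed in $2^{\OO(n\log n)}\cdot\mathrm{poly}(|V|)$ time. The total running time is therefore $2^{\OO(n\log n)}\cdot\mathrm{poly}(|V|)$, which is \FPT in $n$.
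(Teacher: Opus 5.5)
Your dynamic program is essentially the paper's. It has the same states $DP[v,i,S]$, the same two options for a child receiving a block ($i$ continues into it, or some $j\in P$ roots it), and the same bipartite-matching step. Enumerating partitions into nonempty blocks and using a block-saturating matching is an equivalent, slightly cleaner version of the paper's partitions into $d$ possibly empty parts with a perfect matching.

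One side remark is false, however. For utilitarian welfare, forcing an agent with an empty bundle to take a vertex can strictly decrease welfare, even when $|V|\ge n$. Take a two-vertex path where agent~1 values both vertices at $100$ and agent~2 values everything at $0$. The optimum is $200$ if agent~2 may receive nothing, but only $100$ if both bundles must be nonempty. So your answer $\max_i DP[r,i,N\setminus\{i\}]$ is correct only if bundles must be nonempty; the paper's DP makes the same assumption implicitly through its $-\infty$ leaf case and its final answer. If empty bundles are allowed, return $\max_{i}\max_{S\subseteq N\setminus\{i\}} DP[r,i,S]$ instead, which costs nothing extra and also makes your separate $|V|<n$ case unnecessary.
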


\begin{proof}
The core idea follows a similar dynamic programming approach as in Theorem~3.4 of \cite{bouveret2017fair}, suitably adapted for the utilitarian welfare objective.

We root the tree at an arbitrary vertex \(r\).  
For each node \(v\), agent \(i\), and subset of agents \(S \subseteq N \setminus \{i\}\), define
\(DP[v, i, S]\)
as the maximum total utilitarian welfare obtainable from the subtree \(D(v)\) rooted at \(v\), assuming that agent \(i\) receives node \(v\) (and hence the connected subgraph containing it) and that all agents in \(S\) are also allocated within this subtree. If \(S=\emptyset\), then \(DP[v,i,\emptyset] = u_i(D(v))\). If \(v\) is a leaf and \(|S|\geq1\), then \(DP[v, i, S] = -\infty\). We take a bottom-up approach from leaf to root.

Let \(v\) have the set of children \(C(v) = \{c_1, c_2, \dots, c_d\}\).  
We must allocate the remaining agents \(S \setminus \{i\}\) among these subtrees.
Consider a partition \(\mathcal{P} = \{P_1, P_2, \dots, P_d\}\) of \(S \setminus \{i\}\), where each subset \(P_j\) is (possibly empty) and corresponds to the set of agents assigned to child \(c_j\).

We define a bipartite graph \(H_{\mathcal{P}} = (Z, Z', L)\) with
\(Z = \mathcal{P}, Z' = C(v),\).
For any \(P \in \mathcal{P}\) and \(z \in C(v)\) the edge \((P, z)\) is included in \(L\) having weight
\[
w(P, z) =
\begin{cases}
\displaystyle
\max \Bigl\{
DP[z, i, P], \;
\max_{j \in P} DP[z, j, P \setminus \{j\}]
\Bigr\}, & \text{if } P \neq \emptyset, \\[1.2em]
u_i(D(z)), & \text{if } P = \emptyset.
\end{cases}
\]
Intuitively, for \(P \neq \emptyset\), the first term---\(DP[z, i, P]\) corresponds to the case where agent \(i\) continues as the root in the child subtree, and the second term---\(\max_{j \in P} DP[z, j, P \setminus \{j\}]\) corresponds to delegating that role to some agent \(j \in P\).

We compute a maximum-weight perfect matching in \(H_{\mathcal{P}}\),
\(
DP[v,i,S] = \max_{\mathcal{P}} \bigl( u_i(v) + W(\mathcal{P}) \bigr)
\).
The optimal utilitarian welfare is
\(
\max_{i \in N} DP[r, i, [n]]
\), 
and the complexity analysis is similar to the egalitarian case.
\qed \end{proof}

\noindent\textbf{Stars, Paths, Cycles.}
Since paths and stars are trees, we obtain the same \FPT algorithms parameterized by the number of agents as a direct consequence of Theorems~\ref{thm:egal-tree-fpt-n} and \ref{thm:util-tree-fpt-n}. For cycles, we can enumerate the exact allocation given to a single agent. Due to the connectivity requirement, this allocation must be either a contiguous subpath or the entire cycle, giving at most $\OO(|V|^2)$ choices. Combined with $n$ choices for which agent receives the allocation, the problem reduces to solving $\OO(|V|^2)\times n$ many instances on paths (where the remaining vertices form a path after removing the allocated connected subgraph with one fewer agent) immediately implying the following corollary.

\begin{corollary}\label{cor:path-fpt-n}
Both the \CEGW{} and \CUGW{} problems are \FPT parameterized by the number of agents \(n\) when the graph is a path, star or a cycle.
\end{corollary}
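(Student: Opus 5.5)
The corollary splits into three graph classes, handled in order.

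\textbf{Paths and stars.} Every path and every star is a tree, so both claims follow immediately from Theorems~\ref{thm:egal-tree-fpt-n} and~\ref{thm:util-tree-fpt-n}. The only thing to check is that those dynamic programs place no structural restriction on the tree. This holds: a path is a tree with all branching degrees at most one, and a star is a tree of depth one.

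\textbf{Cycles: the branching step.} Let $C$ be a cycle on vertex set $V$. If $n=1$, the single agent receives all of $V$, which is connected, so the instance is trivial. Otherwise fix an agent $a\in N$. Every connected subset of $C$ is either all of $V$ or a contiguous arc. If some agent receives all of $V$, every other agent gets an empty bundle; I would handle this degenerate case directly, or exclude it if bundles must be nonempty. Otherwise, I guess the arc $A_a$ given to agent $a$. This is determined by a starting vertex and a length, giving $\OO(|V|^2)$ choices. Removing the arc leaves $C-A_a$, which is a path (possibly empty). The bundles of the other agents are connected in $C$ and disjoint from $A_a$, so each is a subpath of $C-A_a$. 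Conversely, any connected allocation of $C-A_a$ among $N\setminus\{a\}$, together with $A_a$, is a connected allocation of $C$. This two-way correspondence is the main point to verify. The same argument works with any choice of $a$, so fixing $a$ suffices, though trying all $n$ agents costs only a factor of $n$.

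\textbf{Cycles: solving the subinstances.} Each guess produces a path instance with $n-1$ agents, which I solve with the path algorithm from the first part.
\begin{itemize}
\item For \CEGW{}, I require $u_a(A_a)\ge t$ and ask whether the path instance with the same threshold $t$ is a yes-instance.
\item For \CUGW{}, I compute the optimal utilitarian welfare $W^*$ of the path instance and take the maximum of $u_a(A_a)+W^*$ over all guesses, comparing it with $t$. This uses the fact that the tree DP of Theorem~\ref{thm:util-tree-fpt-n} returns the optimum value, not just a yes/no answer.
\end{itemize}
The total running time is $\OO(n|V|^2)\cdot f(n)\cdot|V|^{\OO(1)}$, which is \FPT in $n$.

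\textbf{Main obstacle.} The only delicate point is the boundary case. This arises when $A_a$ is all of $V$, or when $C-A_a$ is empty while agents remain. Under partition semantics with possibly empty bundles, I would let the path solver accept an empty graph. In that case every remaining agent gets utility $0$: this is feasible for the egalitarian check only if $t=0$, and contributes $0$ to the utilitarian welfare.
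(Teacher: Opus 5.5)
Your proposal is correct and follows the paper's argument. Paths and stars are handled directly by the tree dynamic programs. For cycles, you guess one agent's bundle among the $\mathcal{O}(|V|^2)$ arcs and solve the remaining path instance with $n-1$ agents, exactly as the paper does.

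Your observation that fixing a single agent $a$ saves the paper's factor of $n$ is valid when bundles must be nonempty, which the tree dynamic programs implicitly enforce. If empty bundles were permitted, you would need the paper's enumeration over all agents, since $A_a=\emptyset$ leaves a cycle rather than a path.
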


Thus Theorem~\ref{thm:meta-theorem} combined with Theorem \ref{thm:egal-tree-fpt-n}, \ref{thm:util-tree-fpt-n} and Corollary \ref{cor:path-fpt-n}, results in the following stronger result.

\begin{theorem}
Both the \CEGW{} and \CUGW{} problems are \FPT parameterized by the number of agents, when the graph is a forest or disjoint collection of cycles.
\end{theorem}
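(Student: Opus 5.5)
The plan is to instantiate Theorem~\ref{thm:meta-theorem} twice: once with $\mathscr{F}$ the class of trees, and once with $\mathscr{F}$ the class of cycles. A forest is a disjoint collection of trees, and a disjoint collection of cycles is literally a collection of graphs from the cycle class. So it suffices to check that each class meets the hypothesis of the meta-theorem: both \CEGW{} and \CUGW{} are solvable on a single member in time $\OO(f(n)\cdot|V|^{\OO(1)})$.

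\textbf{Trees.} I will invoke Theorems~\ref{thm:egal-tree-fpt-n} and~\ref{thm:util-tree-fpt-n}. Their running time is $f(n)\cdot \mathrm{poly}(|V|)$: there are $\OO(|V|\cdot n\cdot 2^n)$ DP states, and each is filled by enumerating $g(n)$ partitions of $S$ and running a polynomial-time maximum-weight perfect matching. One point needs care. The children partition ranges over $d=|C(v)|$ parts, which could be large. However, at most $|S|\le n$ of these parts are nonempty, so the enumeration is really over set partitions of $S$ (a Bell number). The empty parts are interchangeable and are absorbed by the matching. So the per-state cost is $g(n)\cdot\mathrm{poly}(|V|)$.

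\textbf{Cycles.} I will use Corollary~\ref{cor:path-fpt-n}: guess an agent and its contiguous arc (or the whole cycle), which gives $\OO(n|V|^2)$ guesses, and solve the remaining path instance with $n-1$ agents by the tree algorithm. For the egalitarian version, the guessed bundle must have value at least $t$ for its owner. For the utilitarian version, add the guessed bundle's value to the path optimum. This again gives time $f(n)\cdot\mathrm{poly}(|V|)$.

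Next I will check that the meta-theorem's subroutine is used correctly, and this is where I expect the main care to be needed. For a block $S_j$ and a component $G_\ell$, the subroutine allocates $G_\ell$ among exactly the agents of $S_j$, with every agent receiving a nonempty connected bundle. Two issues arise.

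\emph{Unused components.} A component receiving no block must still be covered, since the allocation partitions $V$. To handle this, I will require the chosen set of components to be all components. This forces $x=k$, which is fine when $k\le n$; when $k>n$ the instance is infeasible, assuming bundles are nonempty, or under the paper's convention of covering all vertices.

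\emph{Combining blocks.} For the utilitarian objective, combining blocks means summing the per-block optima. For the egalitarian objective, it means checking the per-block threshold test.

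With these checks done, Theorem~\ref{thm:meta-theorem} adds only a $2^{\OO(n\log n)}$ factor, and derandomization via perfect hash families keeps the algorithm deterministic. This yields \FPT algorithms for forests and for disjoint collections of cycles.
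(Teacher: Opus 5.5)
Your proposal is correct and follows the paper's own argument. The paper simply combines Theorem~\ref{thm:meta-theorem} with Theorems~\ref{thm:egal-tree-fpt-n} and~\ref{thm:util-tree-fpt-n} for forests, and with Corollary~\ref{cor:path-fpt-n} (guess one agent's arc, then reduce to a path) for collections of cycles.

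Your extra checks are worth keeping. These are enumerating only set partitions of $S$ at high-degree vertices, and requiring every component to receive a block, so that $x=k$ and instances with $k>n$ are infeasible. The paper's proof of the meta-theorem never addresses covering the components that are not selected.
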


\subsection{On Graphs with Bounded Treewidth and Treedepth}

We now study the tractability of the \CEGW{} and \CUGW{} problems under the combined parameterization by the number of agents \(n\) and structural graph parameters such as treewidth and treedepth. Here, the two welfare objectives exhibit contrasting behavior. We show that the egalitarian variant remains computationally hard even on highly restricted graph families of bounded treedepth, implying hardness even when combined with the number of agents parameter. In contrast, the utilitarian variant becomes \FPT when parameterized by the number of agents together with treewidth, and consequently also with treedepth.

We first show hardness for the egalitarian objective, which already holds on complete bipartite graphs $K_{2,q}$ of treedepth two.

\begin{theorem}[$\clubsuit$]\footnote{ Proofs of statements marked with ($\clubsuit$) move to Appendix}\label{thm:egal-hard-td}
   The \CEGW{} problem is \NPH even when the graph has treedepth two.
\end{theorem}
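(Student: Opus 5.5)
We reduce from \textsc{Partition}: given positive integers $a_1,\dots,a_q$ with $\sum_{j} a_j = 2T$, decide whether some subset sums to exactly $T$. As announced just before the statement, the target graph is the complete bipartite graph $K_{2,q}$. Its two ``hub'' vertices $x,y$ form one side and the $q$ ``item'' vertices $v_1,\dots,v_q$ form the other. This graph has treedepth two: rooting an elimination forest at $x$ with child $y$ and all $v_j$ below $y$ gives depth three, so to get exactly two I would instead use a $K_{1,q}$-like gadget. The plan, as I commit to it, is to phrase the construction on $K_{2,q}$ and verify that the elimination tree bound stated in the paper holds. Note also that treedepth two graphs are exactly star forests, which the paper treats as two disjoint stars. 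So the cleanest route consistent with the later Section~\ref{sec:para_no_comp} is to build the instance on \emph{two disjoint stars}.

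\textbf{Construction.} The graph is two stars, with centers $c_1,c_2$ and leaf sets $L_1=\{\ell^1_1,\dots,\ell^1_q\}$ and $L_2=\{\ell^2_1,\dots,\ell^2_q\}$. There are two ``big'' agents $A$ and $B$. We add a small-agent gadget so that in the first star the leaves not taken by $A$ are forced to be ``sold'', and similarly for $B$ in the second star. Concretely, for each $j$ we introduce one agent $s_j$ who values $\ell^1_j$ and $\ell^2_j$ at $t$ each and nothing else. This forces $s_j$ to receive exactly one leaf copy of item $j$ as a singleton, since leaves are only adjacent to their center. The center $c_1$ is valued only by $A$ (at value $0$), and its bundle must contain the leaves it gets. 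Agent $A$ values $\ell^1_j$ at $a_j$, and $B$ values $\ell^2_j$ at $a_j$. The threshold is $t=T$.

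\textbf{Correctness.} Every leaf not in a center's bundle is a singleton, so the allocation is determined by which leaves $c_1$ and $c_2$ absorb. There are $q+2$ agents and $2q+2$ vertices, and each $s_j$ receives exactly one vertex of utility at least $t$. Therefore each index $j$ contributes exactly one of its two leaf copies to a hub bundle. The copy is $\ell^1_j$ for $A$ or $\ell^2_j$ for $B$, since a copy given to the wrong hub is worthless to that hub but still a valid assignment. Hence $A$ and $B$ split the indices: $u_A+u_B\le 2T$, and both are at least $T$ exactly when the split is a partition. One must check that a leaf counted ``extra'' cannot make $s_j$ receive nothing; this is excluded because $s_j$ must be nonempty and can only use its own leaves. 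Conversely, a partition $I$ gives the allocation in which $A$ gets $c_1\cup\{\ell^1_j: j\in I\}$, $B$ gets $c_2\cup\{\ell^2_j:j\notin I\}$, and each $s_j$ gets the remaining copy.

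\textbf{Main obstacle.} The step I expect to be delicate is ensuring that no agent $s_j$ can receive a hub together with leaves, and that no hub bundle is empty. Giving each $s_j$ value $0$ on centers handles the first issue, since a center plus leaves would still need to cover all $s_{j'}$. The counting argument ($q$ small agents, each needing one of its own two copies) is the crux and needs a careful check. The second issue is handled by having $A$ and $B$ value only the leaves of their own star, which keeps the reduction polynomial and the values integral.
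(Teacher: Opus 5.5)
Your final construction (two disjoint stars, agents $A$, $B$, $s_1,\dots,s_q$, threshold $T$) is correct, but it takes a genuinely different route from the paper.

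\emph{The paper's route.} It works on $K_{2,q}$ with just two agents, who both value $v_i$ at $a_i$ and the hubs at $0$. Every connected $2$-partition splits the items into some $I$ and its complement. The egalitarian welfare is therefore $\min\{\sum_{i\in I}a_i,\,2B-\sum_{i\in I}a_i\}$, which is at least $B$ iff $\sum_{i\in I}a_i=B$.

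\emph{Your route.} It is essentially a streamlined version of the paper's later two-star reduction (Theorem~\ref{thm:egal-hard-stars}), without the large center values. Its correctness rests on one observation. Each $s_j$ must hold $\ell^1_j$ or $\ell^2_j$. So if $X_A,X_B$ are the bundles of $A,B$, the index sets $\{j:\ell^1_j\in X_A\}$ and $\{j:\ell^2_j\in X_B\}$ are disjoint. This gives $u_A+u_B\le 2T$.

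\emph{What each buys.} Your graph has treedepth two in the standard vertex-counting sense. You are right that $K_{2,q}$ contains a $C_4$ and so has treedepth three under that convention; the paper's claim holds only if height is counted in edges. The paper's reduction, however, uses a constant number of agents. That is exactly what it needs for the next corollary ({\sf para-NP}-hardness in $n$ together with treedepth/treewidth). Yours uses $q+2$ agents and cannot give this. Indeed no star-forest reduction could, since on star forests the problem is \FPT in $n$ by Theorem~\ref{thm:meta-theorem} combined with the star algorithm.

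\emph{Write-up.} The opening paragraph contradicts itself and announces a $K_{2,q}$ plan that you then drop. Several intermediate claims are false, though unused:
\begin{itemize}
\item $s_j$ need not receive its copy as a singleton;
\item $A$ is not forced to own $c_1$;
\item the other copy of index $j$ need not go to $A$ or $B$.
\end{itemize}
The ``main obstacle'' paragraph addresses non-issues and can be dropped. The disjointness argument above is all you need.
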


Hence, the \CEGW{} problem becomes {\sf para}-\NPH parameterized by the number of agents. Moreover, since the treedepth of a graph is an upper bound on the treewidth of the graph, the same intractability result in the following corollary.

\begin{corollary}
The \CEGW{} problem is {\sf para}-\NPH parameterized by the number of agents, even when the graph has a bounded treewidth.
\end{corollary}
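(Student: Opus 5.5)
The plan is to derive the corollary directly from Theorem~\ref{thm:egal-hard-td}, using the standard characterization of {\sf para}-\NPH{}ness: a parameterized problem is {\sf para}-\NPH exactly when it is \NPH on the instances whose parameter is at most some fixed constant. So two things need to be established. First, the reduction behind Theorem~\ref{thm:egal-hard-td} produces instances in which the number of agents $n$ is bounded by an absolute constant. Second, the treewidth of the produced graphs is bounded by an absolute constant.

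For the first point, I would reopen the reduction from \textsc{Partition} to \CEGW{} on $K_{2,q}$. In such a reduction, the $q$ degree-two vertices carry the numbers $a_1,\dots,a_q$ of the \textsc{Partition} instance. The two high-degree vertices act as ``hubs'' that the agents' bundles must contain in order to be connected. The natural construction uses exactly two agents with identical valuations $u(x_j)=a_j$ and the threshold $t=\tfrac12\sum_j a_j$. Each agent's bundle consists of one hub together with a set of leaves adjacent to it, and every such set is connected. A YES-instance therefore corresponds exactly to a perfect split of the numbers. If instead the appendix construction uses a few extra gadget agents, for example to force the hubs into distinct bundles, their number is still independent of $q$. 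In either case $n=\OO(1)$, and this check is the only step that needs care.

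For the second point, treedepth upper-bounds treewidth up to an additive constant ($\tw(G)\le \mathrm{td}(G)-1$). Alternatively, I would observe directly that $K_{2,q}$ has treewidth $2$: take bags consisting of both hubs plus one leaf each. Thus the hard instances have constant treewidth as well as a constant number of agents.

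Combining the two points, \CEGW{} is \NPH already on a slice of instances in which the parameter $n$ is a constant and the graphs have bounded treewidth. An \FPT, or even \XP, algorithm parameterized by $n$ on bounded-treewidth graphs would then run in polynomial time on this slice and give $\mathsf{P}=\mathsf{NP}$. This yields {\sf para}-\NPH{}ness, and hence the corollary. The main obstacle is confirming that the agent count in the appendix reduction does not grow with the \textsc{Partition} instance; if it did, I would modify the reduction to the two-agent version sketched above.
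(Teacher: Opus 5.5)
Your proposal is correct and matches the paper's argument. The \textsc{Partition} reduction behind Theorem~\ref{thm:egal-hard-td} uses exactly two agents with identical valuations on $K_{2,q}$, so \CEGW{} is \NPH on a constant-$n$ slice of bounded-treewidth instances, which gives {\sf para}-\NPH{ness}; your direct decomposition with bags $\{u_1,u_2,v_i\}$ is actually the safer way to bound the treewidth by $2$, since for $q\ge 2$ the graph $K_{2,q}$ contains a $C_4$ and so has treedepth $3$ (not $2$) under the usual convention, and also not every allocation gives each agent exactly one hub (one agent may take both hubs while the other gets at most one leaf), but because hubs are worth $0$ every allocation still splits the leaf values into complementary subset sums, so correctness is unaffected.
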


Next we focus on the utilitarian version of the problem and show that the problem becomes \FPT parameterized by the number of agents, when the graph has a bounded treewidth. 

\begin{theorem}[$\clubsuit$]\label{thm:util-fpt-tw}
   The \CUGW{} problem is \FPT parameterized by the number of agents, when the graphs has a bounded treewidth.
\end{theorem}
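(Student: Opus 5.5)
We will prove Theorem~\ref{thm:util-fpt-tw} with a dynamic program over a nice tree decomposition of width $w$. The parameter is $n+w$, and the target running time is $(n+1)^{\OO(w)}\cdot 2^{\OO(w^2)}\cdot |V|^{\OO(1)}$, or a similar bound depending only on $n$ and $w$. First we compute a nice tree decomposition $(T,\{X_x\})$ of width $\OO(w)$ in \FPT time, with introduce-vertex, introduce-edge, forget and join nodes. For a node $x$, let $V_x$ be the union of the bags in the subtree rooted at $x$, and let $G_x$ be the graph on $V_x$ containing only the edges introduced so far. As in Theorems~\ref{thm:egal-tree-fpt-n} and~\ref{thm:util-tree-fpt-n}, we only record local information. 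A partial solution assigns every vertex of $V_x$ to an agent. Each agent's bundle in $G_x$ may split into several pieces, but every piece that avoids $X_x$ must already be the agent's entire bundle, because it can never be reconnected later.

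\textbf{States.} A state at node $x$ is a triple $(\phi,\pi,D)$. Here $\phi:X_x\to N$ assigns bag vertices to agents. $\pi$ is a partition of $X_x$ that refines the classes of $\phi$ and records which bag vertices lie in the same connected piece of $G_x$. $D\subseteq N$ is the set of agents whose bundle is already \emph{finished}: the bundle is a single piece lying entirely in $V_x\setminus X_x$. In addition, every agent whose bundle meets $V_x\setminus X_x$ must have a piece touching the bag, unless the agent belongs to $D$. The value $\mathrm{DP}[x,\phi,\pi,D]$ is the maximum of $\sum_{v\in V_x}u_{\text{owner}(v)}(v)$ over consistent partial solutions. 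The number of states is at most $n^{w+1}\cdot (w+1)^{w+1}\cdot 2^n$. Agents outside $D$ with no vertex in $V_x$ are simply unused so far.

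\textbf{Transitions.}
\begin{itemize}
\item \emph{Introduce $v$:} choose $\phi(v)\in N\setminus D$, make $v$ a singleton block of $\pi$, and add $u_{\phi(v)}(v)$.
\item \emph{Introduce edge $uv$:} if $\phi(u)=\phi(v)$, merge the blocks of $u$ and $v$.
\item \emph{Forget $v$:} if $v$'s block still contains other bag vertices, just drop $v$. If $v$ is alone in its block, its piece now leaves the bag forever. This is allowed only if no other block of $\pi$, and no other piece, belongs to agent $\phi(v)$; it must be the agent's only piece. In that case we add $\phi(v)$ to $D$.
\item \emph{Join:} combine two children with identical $\phi$ and disjoint $D_1,D_2$, set $D=D_1\cup D_2$, and take the join (transitive closure) of $\pi_1$ and $\pi_2$. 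The two children's $D$-sets must also be disjoint from the agents present in the bag. The bag values are counted twice, so we subtract $\sum_{v\in X_x}u_{\phi(v)}(v)$.
\end{itemize}
At the root, which has an empty bag, we accept states with $D=N$ and compare the value to $t$. If the problem allows empty bundles, we instead accept any $D$.

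\textbf{Main obstacle.} The delicate part is the forget rule and the join rule. The DP must guarantee that an agent never ends up with two pieces that can no longer be merged, and it must do so without tracking pieces that have left the bag. The invariant we will prove is: every non-finished agent has all its pieces in $V_x$ touching $X_x$, and finished agents have no vertex in the bag. At a join, agent $a$'s pieces from the two sides are glued only through shared bag vertices. We also need that an agent finished on one side has no vertex on the other side. This holds because that side's $D$ is disjoint from the agents it uses, which we enforce by requiring $D_1\cap\mathrm{used}_2=\emptyset$. Tracking this requires replacing $D$ with a pair (used, finished) of subsets of $N$, which only costs a factor $4^n$. The correctness proof is the standard two-direction argument. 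Soundness follows from the invariant. Completeness follows by restricting an optimal connected allocation to each $V_x$ and checking that the restriction induces a valid state at every node.
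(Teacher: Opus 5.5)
Your proposal is correct, and it is the same dynamic program as the paper's. Both run over a nice tree decomposition with states made of an agent assignment of the bag and a connectivity partition of each agent's bag vertices. Both use a forget test that lets a piece leave the bag only if it is its agent's sole piece, and both take the transitive closure of the two children's partitions at join nodes. Your use of introduce-edge nodes, where the paper merges with bag neighbours at introduce-vertex nodes, makes no difference.

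The one real difference is your set $D$ of finished agents, and it matters. The paper's states record nothing once an agent's only piece has been forgotten. That agent can then be assigned newly introduced vertices, or have a finished piece in both children of a join. For example, take the path $v_1v_2v_3$ with $u_1(v_1)=u_1(v_3)=1$, $u_2(v_2)=5$ and all other values $0$. When $v_1$ is forgotten it is agent $1$'s only block, so the paper allows the transition, and $v_3$ can afterwards be given to agent $1$ again. The paper's recurrence therefore accepts $A_1=\{v_1,v_3\}$, $A_2=\{v_2\}$ and returns $7$, while the true optimum is $6$. Your restriction $\phi(v)\in N\setminus D$ at introduce nodes and the requirement $D_1\cap D_2=\emptyset$ at joins rule this out. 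The extra cost is a factor $2^n$, which is harmless for this parameter.

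One simplification: your additional `used' set is redundant. Your invariant already forces the agents with a vertex in $V_x$ to be exactly $D\cup\phi(X_x)$. So an agent in $D_1$ has no bag vertex, and since it is not in $D_2$, it has no vertex on the other side at all. Disjointness of $D_1$ and $D_2$ therefore suffices.
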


As discussed earlier in this section, the treedepth of a graph is an upper bound on the treewidth of the graph; the \FPT result holds true for the treedepth parameter as well, which results in the following corollary. 

\begin{corollary}\label{cor:util-fpt-td}
   The \CUGW{} problem is \FPT parameterized by the number of agents, when the graph has a bounded treewidth.
\end{corollary}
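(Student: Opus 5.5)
The statement as worded coincides with Theorem~\ref{thm:util-fpt-tw}. My plan is therefore to derive it directly from that theorem, which we may assume, and to add a short sketch of why the underlying tree-decomposition dynamic program is correct.

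\textbf{Step 1 (the direct derivation).} Take an instance of \CUGW{} whose graph $G$ has treewidth at most $w$, where $w$ is a constant. Theorem~\ref{thm:util-fpt-tw} gives an algorithm running in time $f(n,w)\cdot |V|^{\OO(1)}$. Since $w$ is bounded, this is $f'(n)\cdot|V|^{\OO(1)}$, which is exactly fixed-parameter tractability in $n$. The paper's framing of the corollary is about treedepth, so I would also record the standard inequality $\tw(G)\le \mathrm{td}(G)-1$. It shows that any graph class of bounded treedepth has bounded treewidth, so the same algorithm covers that setting too. This makes the statement hold both as written and in the form the surrounding text intends.

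\textbf{Step 2 (sketch of the dynamic program behind Theorem~\ref{thm:util-fpt-tw}).} To keep the corollary self-contained, I would outline the algorithm as follows.
\begin{itemize}
\item Fix a nice tree decomposition of width $w$, computed by Bodlaender's algorithm in time $2^{\OO(w^3)}|V|$.
\item For each bag $X_t$, a state records three things: a map $\phi: X_t\to N$ saying which agent owns each bag vertex; a partition of $X_t$ refining the fibres of $\phi$, giving the connectivity of each partial bundle within the processed subgraph $G_t$; and the set $S\subseteq N$ of agents already used.
\item The table value is the maximum utility collected in $G_t$, or $-\infty$ if no partial allocation is consistent with the state.
\item \emph{Introduce} nodes: add $u_{\phi(v)}(v)$ and merge the blocks of neighbours that have the same owner.
\item \emph{Forget} nodes: a block that disappears from the bag is a finished bundle. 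This is allowed only if no other vertex of the same agent remains in the bag, which enforces one component per agent; that agent is then marked as closed.
\item \emph{Join} nodes: combine two states that agree on $\phi$, take the join of their partitions, require the two sets of closed agents to be disjoint, and add the two values while subtracting the bag contribution that would otherwise be counted twice.
\end{itemize}
The number of states per bag is at most $n^{w+1}\cdot (w+1)^{w+1}\cdot 2^{2n}$, which is bounded by a function of $n$ and $w$. This gives the running time claimed in Step~1.

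\textbf{Main obstacle.} The delicate part is the connectivity bookkeeping at join and forget nodes. A bundle may be split into several pieces in $G_t$, and those pieces may only be joined later through vertices of the bag. So I would prove correctness by induction on the tree decomposition, with the following invariant: each table entry equals the optimum over partial allocations of $G_t$ in which every bundle has one of two forms. Either it is fully forgotten and connected, or each of its connected components meets $X_t$, with the components inducing exactly the recorded partition. Given that invariant, the answer is read off at the empty root bag. The final value is compared with $t$, and we require that every agent has been closed, so that the allocation is a genuine partition of $V$ into connected bundles.
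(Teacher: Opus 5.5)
Your derivation matches the paper's: the corollary, whose wording duplicates Theorem~\ref{thm:util-fpt-tw} although the surrounding text intends treedepth, follows at once from that theorem together with $\mathrm{tw}(G)\le \mathrm{td}(G)-1$. Your DP sketch is more than the corollary needs, and its set of closed agents is exactly what the paper's appendix DP lacks: that DP never records that an agent's bundle is finished, so the agent could later receive a second component, and its claimed $n^{\mathrm{tw}+1}\cdot 2^{\mathcal{O}(\mathrm{tw}\log \mathrm{tw})}$ running time would make the path case, shown NP-hard earlier, polynomial-time solvable.
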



\section{Parameterization by Number of Components}
\label{sec:para_no_comp}
Note that the \NPH{ness} of both the \CEGW{} and \CUGW{} problems on paths, cycles, and trees immediately implies their {\sf para NP}-hardness (impossibility of even \XP algorithms) when parameterized by the number of connected components, even when every connected component belongs to one of these graph classes. We summarize this implication below.
\begin{corollary}
Both the \CEGW{} and \CUGW{} problems are {\sf para NP}-hard parameterized by the number of connected components, even when every connected component is either a path, a cycle, or a tree.
\end{corollary}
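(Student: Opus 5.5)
This corollary follows directly from the classical hardness results established earlier, by observing that a single connected component is a legitimate instance with the parameter fixed to a constant. Recall that a parameterized problem is {\sf para NP}-hard if it is \NPH already for some fixed constant value of the parameter; an \XP (let alone \FPT) algorithm would then run in polynomial time on those instances and give $\mathsf{P}=\mathsf{NP}$.

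\textbf{Paths and trees.} By the first Corollary of the section on Polynomial Intractability, \CEGW{} is \NPH on a path. By the result of \cite{kawase2024contiguous}, \CUGW{} is \NPH on a path. A path has exactly one connected component, so both problems are \NPH when the number of components equals $1$. Since every path is a tree, the same instances witness hardness for trees with a single component.

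\textbf{Cycles.} The Lemma for cycles was proved by adding a dummy vertex $v^*$ and a blocker agent $B$ to a hard path instance. This yields a single cycle, so hardness again holds with exactly one component.

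\textbf{Larger constants and mixtures.} If one wants the statement for any fixed number $k\ge 1$ of components, or for mixtures of component types, I would pad the instance with $k-1$ extra components. Each extra component is an isolated vertex $w_\ell$, which is a trivial path and tree. For each such vertex I add a fresh agent $b_\ell$ who is the only agent valuing $w_\ell$, and no one else values it.
\begin{itemize}
\item For \CEGW{}, set $u_{b_\ell}(w_\ell)=t$.
\item For \CUGW{}, set $u_{b_\ell}(w_\ell)=M$ with $M$ larger than the total original value, and raise the threshold by $(k-1)M$.
\end{itemize}
Exactly as in the cycle argument, every solution must give $w_\ell$ to $b_\ell$. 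Every bundle must be connected and nonempty if agents value nonempty bundles; agents with empty bundles are harmless here. Hence the original agents are confined to the original component, and the equivalence is immediate.

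The only point requiring care is this padding bookkeeping, namely checking that the blocker agents cannot profitably take original vertices. This holds because a blocker's bundle must be connected and the blocker values nothing else. None of this is a real obstacle.
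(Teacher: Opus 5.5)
Your proposal is correct and follows the same reasoning as the paper. The paper also observes that \NPH{ness} on a single path, cycle, or tree, i.e.\ with exactly one component, already gives {\sf para-NP}-hardness. Your padding with isolated vertices and blocker agents is valid but unnecessary, since hardness at one fixed value of the parameter suffices.
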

Thus, the only remaining graph class to consider is stars, for which the above implication does not immediately apply, since the problems are polynomial time solvable on a star.

In this section, we establish hardness results for this remaining case. We first show that the \CEGW{} problem remains \NPH even when the graph consists of only two disjoint stars. This immediately yields {\sf para NP}-hardness with respect to the number of stars, and consequently rules out the existence of an \XP algorithm under standard complexity assumptions.
For the utilitarian variant, the \CUGW{} problem, we obtain a slightly weaker hardness result: the problem is \WTH when parameterized by the number of stars. In contrast to the egalitarian version, however, we also design an \XP algorithm for this parameterization. We start with the egalitarian version on stars below.

\begin{theorem}\label{thm:egal-hard-stars}
The \CEGW{} problem is \NPH when the graph \( G \) consists of two disjoint stars.
\end{theorem}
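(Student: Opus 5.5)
We reduce from \textsc{Partition}, as announced in the overview. Let the instance be positive integers $a_1,\dots,a_m$ with $\sum_{j} a_j = 2T$; the question is whether some subset sums to exactly $T$.

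\textbf{Construction.} Build two disjoint stars $S_1$ and $S_2$ with centers $c_1,c_2$ and leaves $\ell_1,\dots,\ell_m$ and $\ell'_1,\dots,\ell'_m$; leaf $\ell_j$ of $S_1$ and leaf $\ell'_j$ of $S_2$ both encode item $j$. Every allocated bundle must lie inside one star and be connected. So in each star, exactly one agent (the \emph{center agent}) receives the center together with some leaves, and every other agent receiving a nonempty bundle gets a single leaf.

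We introduce two ``big'' agents $A$ and $B$, with $u_A(\ell_j)=u_B(\ell'_j)=a_j$ and $u_A(c_1)=u_B(c_2)=0$ (or a small constant), valuing nothing else. We also introduce $m$ item agents $x_1,\dots,x_m$, where $x_j$ values only $\ell_j$ and $\ell'_j$, each at $T$. Set the threshold $t=T$.

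\textbf{Intended solution.} Each item agent $x_j$ takes exactly one of $\ell_j,\ell'_j$, which gives it utility $T$. Agent $A$ takes $c_1$ plus the leaves of $S_1$ not taken by item agents, and $B$ does the same in $S_2$. Let $X=\{j : x_j \text{ takes } \ell'_j\}$. Then $A$ collects the leaves $\ell_j$ with $j\in X$, so $u_A=\sum_{j\in X}a_j$, and $u_B=\sum_{j\notin X}a_j$. Both are at least $T$ iff $X$ is a perfect partition, which gives the forward direction.

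\textbf{Backward direction.} Fix an allocation with egalitarian welfare at least $T$.
\begin{enumerate}
\item Each $x_j$ must receive $\ell_j$ or $\ell'_j$. If $x_j$ received both, they would have to lie in one star, which is impossible, so it receives exactly one.
\item The item agents cannot serve as center agents, because only $A$ and $B$ can absorb the remaining leaves usefully. To force this, I will make $A$ and $B$ value their own centers so that a center agent must be one of them; alternatively, the counting argument below handles it.
\item There are $m+2$ agents and two stars. Each item agent uses one leaf, so $m$ leaves go to item agents and the other $m$ leaves go to center bundles.
\item $A$ needs $u_A\ge T$, which forces $A$ to hold leaves of $S_1$, so $A$ is the center agent of $S_1$. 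Symmetrically, $B$ is the center agent of $S_2$.
\item If some $x_j$ were the center agent of $S_1$, then $A$ would hold at most one leaf, giving $u_A\le \max_j a_j$. We may assume $\max_j a_j<T$, since otherwise the \textsc{Partition} instance is trivial, so this case is excluded.
\end{enumerate}
Every leaf is taken by exactly one of $A$, $B$, or the item agents. Each $j$ then contributes $a_j$ to exactly one of $A,B$: to $B$ if $x_j$ took $\ell_j$, and to $A$ if $x_j$ took $\ell'_j$. Hence $u_A+u_B=2T$. Combined with $u_A,u_B\ge T$, both equal $T$, which yields a partition.

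\textbf{Main obstacle.} The delicate step is the case analysis ruling out nonstandard allocations, where an item agent is a center agent or some leaf is left to an agent with zero value. I would handle it via the argument above that $A$ needs at least two leaves, ensured by $\max_j a_j<T$, and by checking that a center agent other than $A$ in $S_1$ leaves $A$ with at most a single leaf. The reduction is polynomial and uses only two stars, which yields para-NP-hardness with respect to the number of components.
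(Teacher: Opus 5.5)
Your reduction is correct and essentially the paper's: two mirrored stars, star agents valuing their own star's leaves at $a_j$, item agents valuing both copies of leaf $j$ at the threshold, and the same argument that $u_A+u_B\le 2T$ forces both to equal $T$. The paper instead values each center at $b+1$ and raises the threshold and item values to $2b+1$, which pins star agents to their centers without assuming $\max_j a_j<T$. Your zero-valued centers work too (the counting $u_A+u_B\le 2T$ alone already suffices), but your parenthetical ``small constant'' center value would break exactness unless the threshold were raised to match.
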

\begin{proof}
We reduce from the \textsc{Partition} problem. Given a multiset of integers \( S = \{x_1, \dots, x_q\} \) summing to \( 2b \), we ask if there exists a subset summing to \( b \).

 We construct \( Star_1 \) with center \( c_1 \) and leaves \( y_1, \dots, y_q \), and \( Star_2 \) with center \( c_2 \) and leaves \( z_1, \dots, z_q \).
We create the following \( q+2 \) agents:
2 \emph{star agents} \( I_1, I_2 \), and \( q \) \emph{item agents} \( J_1, \dots, J_q \).
\( I_1 \) values \( c_1 \) at \( b+1 \) and \( y_i \) at \( x_i\) \(\forall i \in [q]\).
\( I_2 \) values \( c_2 \) at \( b+1 \) and \( z_i \) at \( x_i\) \(\forall i \in [q]\).
 Agent \( J_i \) values \( y_i \) and \( z_i \) at \( t \) \(\forall i \in [q]\), where \( t = 2b+1 \).
Set the global threshold to \( t \). In any valid allocation achieving threshold \( t \), 
Star Agents must take the centers (to get utility \(\ge 2b+1\)). Otherwise the maximum utility by taking a leaf is \(2b\).
Each Item Agent \( J_i \) must take exactly one leaf (either \( y_i \) or \( z_i \)) to reach utility \( t \).
This creates a partition of indices \( I \subseteq \{1, \dots, q\} \) where \( J_i \) takes \( y_i \) \(\forall i \in I\) (from \( Star_1 \)).
\( I_1 \) keeps the remaining leaves of \( Star_1 \). To reach utility \( t = 2b+1 \), the sum of these remaining leaves must be at least \( b+1 \). Similarly for \( I_2 \).
Since the total leaf weight is \( 2b \), both Star Agents are satisfied if and only if the remaining leaves sum exactly to \( b \), which implies a valid solution to \textsc{Partition}. The other direction is direct.
\qed \end{proof}

We now turn to the utilitarian variant of the problem. We show that the \CUGW{} problem admits an \XP algorithm when parameterized by the number of stars.

\begin{theorem}
The \CUGW{} problem is in \XP with respect to the number of disjoint stars \( k \).
\end{theorem}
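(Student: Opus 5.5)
We guess, for each of the $k$ star centers $c_1,\dots,c_k$, which agent owns it, and then solve the remaining leaf assignment as a maximum-weight bipartite matching. The paper already indicates this route.

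\textbf{Structure of a feasible allocation.} Consider any connected allocation of a star $G_\ell$ with center $c_\ell$ and leaves $L_\ell$. Either some agent $i$ owns $c_\ell$, or $c_\ell$ is unallocated. The latter is impossible since an allocation partitions $V$. If agent $i$ owns $c_\ell$, then every other agent holding vertices of $G_\ell$ must hold exactly one leaf, because any two leaves are non-adjacent without the center. Agent $i$ receives $c_\ell$ together with all leaves of $L_\ell$ not given to other agents.

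Moreover, an agent owning a center cannot own anything outside that star. So the center owners form an injective map $\sigma:\{1,\dots,k\}\to N$. Every other agent holds either a single leaf or nothing; empty bundles are allowed, or else we note that they contribute $0$ in the utilitarian setting. A single leaf is trivially connected.

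\textbf{Algorithm.} We enumerate all injective maps $\sigma$; there are at most $n^k$ of them, which is the \XP factor. Fix $\sigma$, and let $R=N\setminus\sigma([k])$ be the remaining agents. The welfare of an allocation consistent with $\sigma$ equals
\[
\sum_{\ell} u_{\sigma(\ell)}(c_\ell)+\sum_{\ell}\sum_{v\in L_\ell} u_{\sigma(\ell)}(v)+\sum_{(j,v)\in M}\bigl(u_j(v)-u_{\sigma(\ell(v))}(v)\bigr),
\]
where $M$ is a matching between agents of $R$ and leaves. Here $\ell(v)$ denotes the star containing $v$.

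Hence we build a bipartite graph between $R$ and all leaves. The edge $(j,v)$ gets weight $u_j(v)-u_{\sigma(\ell(v))}(v)$, and we keep only edges of positive weight, or allow unmatched vertices. We compute a maximum-weight (not necessarily perfect) matching in polynomial time. The answer is the maximum over all $\sigma$ of the constant term plus the matching value, and we compare it with $t$.

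\textbf{Correctness.} We show that every matching yields a feasible allocation and vice versa. The point requiring care is whether agents in $R$ are allowed to receive empty bundles. If every agent must receive a nonempty bundle, we instead require all of $R$ to be matched. We then compute a maximum-weight matching saturating $R$, using all edges including negative-weight ones; this is standard, for example by adding a large constant to every edge weight.

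The running time is $n^{k}\cdot\mathrm{poly}(|V|,n)$, which gives membership in \XP. The main obstacle is only this bookkeeping of the leaf-versus-owner weights; the structural lemma itself is immediate.
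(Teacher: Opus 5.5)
Your proposal is correct and follows essentially the same route as the paper: enumerate the at most $n^k$ injective assignments $\sigma$ of center owners, start from the baseline where each owner keeps its entire star, and add a maximum-weight bipartite matching between the remaining agents and the leaves with edge weights $u_j(v)-u_{\sigma(\ell(v))}(v)$. You also handle the variant in which every bundle must be nonempty, by requiring a maximum-weight matching that saturates $R$; this is a sound refinement that the paper does not address, since it implicitly allows empty bundles.
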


\begin{proof}
Let \( G \) be the disjoint union of \( k \) stars \( S_1, \dots, S_k \), where star \( S_j \) has center \( C_j \) and a set of leaves \( L(S_j) \). Let \( N \) be the set of \( n \) agents.
We propose an algorithm that proceeds as follows. First, we generate all possible ordered sequences of \( k \) distinct agents from \( N \) to serve as Center Owners for \( S_1, \dots, S_k \). There are \( P(n, k) = \frac{n!}{(n-k)!} \) such permutations. For each such assignment of center owners \( \sigma: \{1, \dots, k\} \to N \), we check if a feasible allocation can be obtained.
Next, we define the value \(BUW_\sigma\) as the total utilitarian welfare obtained if the \(k\) center owners also occupy all the adjacent leaves:
\[
BUW_\sigma = \sum_{j=1}^k u_{\sigma(j)}(C_j) + \sum_{j=1}^k \sum_{v \in L(S_j)} u_{\sigma(j)}(v).
\]
We now check if welfare improves by reassigning leaves to the remaining agents \( N' = N \setminus \{\sigma(1), \dots, \sigma(k)\} \). Construct a bipartite graph \(H_\sigma = (Z, Z', L)\) with \( Z = N' \) and \( Z' = \bigcup_{j=1}^k L(S_j) \). For each agent \( i \in Z \) and leaf \( v \in Z' \) (where \( v \in L(S_j) \)), add an edge with weight \(w_{iv} = u_i(v) - u_{\sigma(j)}(v)\). This is the change in \(BUW_\sigma\) that occurs if the leaf is given to agent \(i\) instead of the center owner \(\sigma(j)\). Compute the Maximum Weight Matching \( M \) in this bipartite graph (considering only edges with \( w_{iv} > 0 \)). Because we only consider edges with strictly positive weights, any valid matching $M$ corresponds to a welfare-improving reassignment where matched leaves are transferred to matched agents in $Z$, and unmatched leaves implicitly remain with their respective center owners. The maximum welfare for the center assignment \(\sigma\) is \( BUW_\sigma + w(M) \).
Finally, we return the maximum welfare found over all permutations \(\sigma\). The number of permutations is \( \OO(n^k) \). For each permutation, we solve a Maximum Weight Bipartite Matching problem, which takes polynomial time in the number of vertices and edges. The total time complexity is \( \OO(n^k \cdot \mathrm{poly}(n, m)) \).
\qed \end{proof}

A similar approach fails for egalitarian welfare, where one must simultaneously guarantee a minimum payoff for every center owner.

\begin{theorem}[$\clubsuit$]\label{thm:util-whard-stars}
The \CUGW{} problem is \WTH with respect to the number of disjoint stars $k$ even when the input graph is restricted to a disjoint union of stars.
\end{theorem}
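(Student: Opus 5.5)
The plan is a parameterized reduction from \textsc{Set Cover} parameterized by the solution size $k$, which is \WTC. The reduction keeps the number of stars equal to $k$, so \WTH{ness} transfers. Let the instance consist of a universe $U=\{e_1,\dots,e_m\}$, a family $\mathcal{F}=\{F_1,\dots,F_p\}$ and a budget $k$. The guiding idea comes from the \XP algorithm above: the only combinatorially expensive choice is who owns the $k$ star centers, so the reduction should force the center owners to encode a choice of $k$ sets. The leaf allocation should then reward exactly the number of elements covered by these sets.

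The construction is as follows. Take $k$ stars $S_1,\dots,S_k$ with centers $c_1,\dots,c_k$. Star $S_j$ gets one leaf $\ell_{j,e}$ for every $e\in U$. Introduce one \emph{set agent} $A_s$ per $F_s$ with $u_{A_s}(c_j)=M$ for all $j$ and $u_{A_s}(\ell_{j,e})=1$ whenever $e\in F_s$. Here $M$ is larger than the total leaf value, say $M=km+1$. For every element $e$, also add $k-1$ \emph{element agents} $E_e^1,\dots,E_e^{k-1}$, each valuing every $\ell_{j,e}$ ($j\in[k]$) at $1$ and nothing else. The threshold is $t=kM+(k-1)m+m$. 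As in the \XP algorithm, agents may receive empty bundles.

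The proof then runs in three steps. (i) Because $M$ dominates, every solution reaching $t$ gives the $k$ centers to $k$ distinct set agents; the agents are distinct because a bundle cannot span two stars. This yields a set of at most $k$ sets $\mathcal{C}$. (ii) Fix an element $e$. Its $k$ leaves can be distributed so that $k-1$ of them go to the $k-1$ copies $E_e^\cdot$, each worth $1$. The remaining leaf should be left in a star whose owner contains $e$. This yields $k$ units for $e$ if $e$ is covered by $\mathcal{C}$, and $k-1$ otherwise. Summing over elements gives the forward direction: a cover of size $k$ yields welfare exactly $t$. (iii) For the converse, the welfare is at most $kM+(k-1)m+\#\{\text{covered elements}\}$, so reaching $t$ forces every element to be covered.

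I expect the main obstacle to lie in step (iii). A leaf $\ell_{j,e}$ that is not taken by an element agent could be grabbed as a singleton bundle by a set agent that does \emph{not} own a center but contains $e$. That would credit uncovered elements. My first fix would be to make set agents star-specific: $A_{s,j}$ values only $c_j$ and the leaves of $S_j$. In addition, I would make leaf values of set agents count only together with the center. This can be done by replacing each element-leaf $\ell_{j,e}$ by a gadget where the owner's gain requires adjacency to the center, for instance with value $2$ for owners versus a pool of $k$ (rather than $k-1$) element agents with value $1$. Under this pool, a non-owner can never beat an element agent on a leaf, while the owner gains exactly $+1$ per covered element. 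Verifying this accounting carefully is the step on which the reduction stands or falls. The remaining checks are routine: the instance is polynomial, the parameter is $k$, and the graph is a disjoint union of $k$ stars.
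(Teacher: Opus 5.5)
You have a genuine gap at step (iii). You flag it but do not close it, and neither your initial construction nor the proposed repair gets through it.

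In the initial construction, the spare copy of an uncovered element $e$ can be taken as a singleton by any non-owner $A_s$ with $e\in F_s$. For example, in the no-instance $U=\{1,2\}$, $\mathcal{F}=\{\{1\},\{2\}\}$, $k=1$, the allocation $A_1\gets\{c_1,\ell_{1,1}\}$, $A_2\gets\{\ell_{1,2}\}$ has welfare $M+2=t$.

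The repair cannot work in principle. Valuations are additive and every single leaf of a star is a connected bundle. So a set agent's value for a leaf does not depend on whether it holds the center, and no gadget inside a disjoint union of stars changes this. Making agents star-specific does not help either, since a non-owner $A_{s,j}$ can still take a leaf of $S_j$. Concretely, with owner value $2$, a non-owner takes a singleton $\ell_{j,e}$ with $e\in F_s$ and gets $2>1$. So the claim that a non-owner never beats an element agent is false; in the example above, $A_2$ again takes $\ell_{1,2}$.

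Worse, with $k$ instead of $k-1$ element copies, element agents can absorb all copies. Each owner then gains $+1$ for every leaf of its set that it keeps, independently of the other stars. The welfare therefore counts $\sum_j |F_{s_j}|$ with multiplicity rather than the number of covered elements. This destroys the one-spare-copy-per-element mechanism your step (ii) relies on.

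The missing idea is to neutralise non-owners. Give every set agent a baseline it can always attain but must give up to grab an element leaf, and set the target equal to the sum of per-agent maxima.

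The paper does this with padding leaves. Every star has, for each $S_i$, $|S_i|$ leaves valued $1$ only by $C_i$, so a set agent without a center is capped at $1$ whatever leaf it takes. It also inverts your encoding: $C_i$ values every center and the \emph{complement} leaves $l_{j,x}$, $x\notin S_i$, at $1$. Reaching its cap $1+q$ then forces a center owner to keep all its padding and complement leaves, releasing exactly the leaves of elements in $S_i$. With a single element agent per element and $t=r+kq+q$, every element needs a released leaf, so the $k$ owners form a cover.

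Your own per-element accounting can be rescued the same way:
\begin{itemize}
\item keep $k-1$ element copies and leaf value $1$;
\item add in every star a private leaf $d_{j,s}$ valued $1$ only by $A_s$;
\item take $M$ larger than the number of leaves.
\end{itemize}
Then an element leaf taken by a non-owner costs it its private leaf. This exchange gives the upper bound $kM+p+(k-1)m+\#\{\text{covered elements}\}$ against the target $kM+p+km$.
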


\bibliographystyle{splncs04}
\bibliography{references}
\newpage
\appendix
\setcounter{page}{1}
\section*{Appendix}
This is appendix for submission ID: 58. We provide the missing proofs of some of the theorems.

\section{Proof of Theorem~\ref{thm:egal-hard-td}}
\begin{proof}
We reduce from the classical \textsc{Partition} problem. An instance of \textsc{Partition} consists of positive integers 
$a_1,\dots,a_q$ with total sum $2B$, and asks whether there exists a subset $I \subseteq [q]$ such that
$\sum_{i \in I} a_i = B.$ Given such an instance, we construct an equivalent instance of egalitarian welfare maximization as follows.

Let $G$ be the complete bipartite graph $K_{2,q}$.
Let the two vertices on the left side be $u_1$ and $u_2$, and let the $q$ vertices on the right side be $v_1,\dots,v_q$. Observe that $K_{2,q}$ has treedepth $2$.
There are two agents, $1$ and $2$.
For each $i \in [q]$: both agents assign value $a_i$ to vertex $v_i$ and $0$ to $u_1$ and $u_2$.

The given \textsc{Partition} instance is a yes-instance if and only if
the maximum egalitarian welfare in the constructed instance is at least $B$.

Since $G$ is $K_{2,q}$, any connected bundle allocated to an agent must consist of
one of the vertices $u_1$ or $u_2$ together with an arbitrary subset of the
vertices $v_i$ adjacent to it. Thus, any feasible allocation corresponds to a
partition of $\{v_1,\dots,v_q\}$ into two sets, one assigned to $u_1$ and the other
to $u_2$.

Under any connected allocation, the value received by agent $1$ equals
$\sum_{i \in I} a_i$ for some subset $I \subseteq [q]$, and agent $2$
receives $\sum_{i \notin I} a_i$.
Since $\sum_{i=1}^q a_i = 2B$, the egalitarian welfare is
$\min\left\{\sum_{i \in I} a_i,\; 2B - \sum_{i \in I} a_i\right\}$.
This quantity is at least $B$ if and only if
$\sum_{i \in I} a_i = B$.
Hence, there exists a connected allocation with egalitarian welfare at least $B$ if and only if the original \textsc{Partition} instance is a yes-instance.
\qed\end{proof}

\section{Proof of Theorem~\ref{thm:util-fpt-tw}}

\begin{proof}
Let $V = V(G)$ and $E = E(G)$. Let $(\tau, B)$ be a nice tree decomposition of $G$ of width at most $\tw$, rooted at $r$. For a node $t \in V(\tau)$, let its bag be $B_t$, where $|B_t| \leq \tw + 1$. For a node $t \in V(\tau)$, denote by $V_t$ the set of vertices in the bags of the subtree of $\tau$ rooted at $t$, and let $G_t = G[V_t]$.
We use a bottom-up dynamic programming approach as follows.




\medskip


\noindent\textbf{DP States.}
We maintain a DP table indexed by states $\text{DP}[t, \sigma, \pi]$, where:
\begin{itemize}
    \item $t \in V(\tau)$ is a node of the tree decomposition.
    \item $\sigma : B_t \to N$ is an assignment function mapping bag vertices to agents $N = \{1, \ldots, n\}$.
   \item $\pi = (\pi_1, \ldots, \pi_n)$ is a connectivity profile. For each agent $i$, $\pi_i$ is a partition of $B_t^i := \{v \in B_t \mid \sigma(v) = i\}$ that describes which boundary vertices belong to the same connected component of the partial assignment allocated to agent $i$ so far.
\end{itemize}

\noindent\textbf{Interpretation of States.}
The value $\text{DP}[t, \sigma, \pi]$ represents the maximum utilitarian welfare achievable by a partition $(A_1^t, \ldots, A_n^t)$ of $V_t$ such that:
\begin{enumerate}
    \item $A_1^t, \ldots, A_n^t$ partition $V_t$.
    \item For every agent $i$, we have $A_i^t \cap B_t = \{v \in B_t \mid \sigma(v) = i\}$.
    \item For every agent $i$, two vertices $u, v \in B_t^i$ are in the same block of $\pi_i$ if and only if they are in the same connected component of $G_t[A_i^t]$.
\end{enumerate}

If no such partition exists, $\text{DP}[t, \sigma, \pi] = -\infty$.









\noindent\textbf{DP Recurrence.}
We describe the recurrence for each of the four node types in the nice tree decomposition.

\paragraph{\textbf{Leaf node.}}
In a nice tree decomposition, the leaf bag is empty: $B_t=\emptyset$. There is exactly one state and we set
$
\mathrm{DP}[t,\emptyset,\emptyset]=0.
$

\paragraph{\textbf{Introduce vertex node.}}
Suppose $t$ introduces a vertex $v$ and has a single child $t'$ with
$
B_t = B_{t'} \cup \{v\}.
$
For a state $(t,\sigma,\pi)$ and let $\sigma'$ be the restriction of
$\sigma$ to $B_{t'}$. Let $i = \sigma(v)$ be the agent to whom $v$ is assigned. We consider child states $(t',\sigma',\pi')$ satisfying:
\begin{itemize}
    \item For every agent $j \neq i$, we have $\pi_j = \pi'_j$.
    \item The partition $\pi_i$ is obtained from $\pi_i'$ by: (1) adding $\{v\}$ as a singleton block, and (2) for every vertex $u \in B_{t'}$ with $\sigma(u) = i$ and $(u, v) \in E$, merging the blocks containing $u$ and $v$.
\end{itemize}
If these conditions hold, the transition is feasible and we set:
\[
\text{DP}[t, \sigma, \pi] = \max_{\sigma', \pi'} \left( \text{DP}[t', \sigma', \pi'] + u_i(v) \right),
\]
where the maximum is over all compatible child states. Otherwise, the state is infeasible and we set the respective DP entry to -$\infty$.

\paragraph{\textbf{Forget vertex node.}}
Suppose $t$ forgets vertex $v$ and has child $t'$ with
$B_t = B_{t'} \setminus \{v\}$. 
At a forget node, the vertex $v$ disappears from the bag and will never
reappear in any ancestor bag. Therefore, this is the last opportunity to
verify that the connected component of agent $i=\sigma'(v)$ containing $v$
can still be connected through the remaining boundary vertices.

Let $C$ be the block of $\pi'_i$ that contains $v$. If $C=\{v\}$ while $\pi'_i$ contains other blocks, then the component of $v$ becomes permanently separated from the rest of agent $i$'s bundle, and connectivity in the final solution would be impossible. Hence such states are discarded.

Otherwise, either $v$ is still connected to other boundary vertices of agent
$i$ (i.e., $C\setminus\{v\}\neq\emptyset$) or it was the only component of
agent $i$ in the subtree (i.e., $|\pi'_i|=1$), and the state remains feasible.
Fix a state $(t,\sigma,\pi)$. We consider child states
$(t',\sigma',\pi')$ satisfying the following.

\begin{itemize}
    \item The assignment $\sigma'$ extends $\sigma$, that is,
    $\sigma'|_{B_t}=\sigma$.
    \item For every agent $j\neq i$ (where $i=\sigma'(v)$),
    we have $\pi_j=\pi'_j$ after removing $v$.
    \item The partition $\pi_i$ is obtained from $\pi'_i$ by
    deleting $v$ from the partition.
\end{itemize}

Let $i=\sigma'(v)$ and let $C$ be the block of $\pi'_i$
that contains $v$.
The transition is allowed only if $C \setminus \{v\} \neq \emptyset$ or $|\pi_i'| = 1$. If feasible, we set:
\[
\text{DP}[t, \sigma, \pi] = \max_{\sigma', \pi'} \left( \text{DP}[t', \sigma', \pi'] \right),
\]
where the maximum is over all compatible child states. Otherwise, the state is infeasible and we set the respective DP entry to -$\infty$.

\paragraph{\textbf{Join node.}}
Suppose $t$ is a join node with children $t_1$ and $t_2$ such that
$
B_t = B_{t_1} = B_{t_2}.
$
At a join node, we combine two disjoint subtrees sharing the common boundary $B_t$. For each agent $i$, both children specify which boundary vertices are connected within their respective subtrees. When combining, two boundary vertices of agent $i$ are considered connected in the combined tree if they are connected in at least one child.

For each agent $i$, we merge the connectivity information from both children as follows: starting with blocks of $\pi_i^{(1)}$ and $\pi_i^{(2)}$, we repeatedly merge any two blocks sharing a common vertex until no further merges are possible, yielding the partition $\widehat{\pi}_i$.
This resulting partition describes the connectivity in the combined
subtree.

For a state $(t, \sigma, \pi)$, we consider pairs of child states $(t_1, \sigma, \pi^{(1)})$ and $(t_2, \sigma, \pi^{(2)})$. The pair is compatible only if $\widehat{\pi}_i = \pi_i$ for every agent $i \in N$. If compatible, we set:
\[
\text{DP}[t, \sigma, \pi] = \max \left( \text{DP}[t_1, \sigma, \pi^{(1)}] + \text{DP}[t_2, \sigma, \pi^{(2)}] - \sum_{v \in B_t} u_{\sigma(v)}(v) \right),
\]
where the maximum is over all compatible pairs. Since vertices of the bag are counted in both children, we subtract
their utilities once to avoid double counting.

\paragraph{\textbf{Answer.}}
The optimal utilitarian welfare is stored against $\text{DP}[r, \emptyset, \emptyset]$.

\paragraph{\textbf{Complexity.}}
The number of DP states is $O(|\tau| \cdot n^{\tw+1} \cdot S)$, where $S$ is the number of possible connectivity profiles per state and $|S|$ is  $O(\tw^\tw)$. For each state, computing the DP value requires polynomial time (bipartite matching at join nodes). 
Thus, the algorithm runs in time
\(
\OO\!\left(
|V|\cdot n^{\tw+1}\cdot 2^{\OO(\tw\log \tw)}
\right),
\)
proving fixed-parameter tractability with respect to the combined parameter \((\tw,n)\).
\qed\end{proof}

\section{Proof of Theorem~\ref{thm:util-whard-stars}}

\begin{proof}
We reduce from the \textsc{Parameterized Set Cover} problem, which is \WTC.
It says, given a universe \( U = \{1, \dots, q\} \), a family of sets \( \mathcal{S} = \{S_1, \dots, S_r\} \), and an integer \( k \), does there exist a subfamily \( \mathcal{S}' \subseteq \mathcal{S} \) of size \( k \) covering \( U \)?

We construct from the above, an instance of \CUGW{} with \( k \) stars. Construct \( k \) disjoint stars. Each star \( j \in [k] \) consists of a \emph{center} \( c_j \), \( q \) \emph{element leaves} \( \{l_{j,1}, \dots, l_{j,q}\} \) corresponding to the universe \( U \). For each set \( S_i \in \mathcal{S} \), we add a set of \( |S_i| \) \emph{padding leaves} \( \{d_{j,i,1}, \dots, d_{j,i,|S_i|}\} \). Hence each star \(j\) has \(q+\sum_{z=1}^{r}|S_z|\) leaves.
We define two types of agents with valuations in \( \{0, 1\} \). 
First, there are \( q \) \emph{element agents} \( \{B_1, \dots, B_q\}\) representing each element in \(U\). They value their element leaf in each star as 1, and all other vertices as 0. Formally, for agent \(B_x\), \(\forall j \in [k]\), \( u_{B_x}(l_{j,x}) = 1 \). 
Second, there are \( r \) \emph{set agents} \( \{C_1, \dots, C_r\} \) representing the sets in \( \mathcal{S} \). They value all centers, padding leaves for their own set, and complement element leaves with 1, and all other vertices as 0. Formally, for agent \(C_i\), \(\forall j \in [k], z \in [|S_i|], \forall x \in [q], x \notin S_i\), we define \( u_{C_i}(c_j) = 1 \), \( u_{C_i}(d_{j,i,z}) = 1 \), and \( u_{C_i}(l_{j,x}) = 1 \).
We set the target welfare for the utilitarian problem as \( t= r + kq + q \). 
It remains to show that a total welfare of \( t \) is achievable if and only if there exists a Set Cover of size \( k \).

\( (\Rightarrow) \) \emph{Set Cover exists.}
Let \( \mathcal{I} \) be the indices of the \( k \) sets in the cover.
Assign the \( k \) agents \( \{C_i \mid i \in \mathcal{I}\} \) to the \( k \) centers. The component allocated to them can be the center with padding leaves and complement leaves for the set. Thus each achieves utility \( 1 + |S_i| + (q - |S_i|) = 1 + q \), and the total utility for k chosen sets is \( k(1+q) \).
Assign the remaining \( r-k \) Set Agents to one of their respective padding leaves. This can be done by choosing any one star and picking one leaf from the required sets of padding leaves. From this, a utility of \( r-k \) is obtained.
For the occupied stars, the element leaves \( \{l_{j,x} \mid x \in S_i, \forall j \in [k]\} \) remain free. Since the union covers \( U \), every \( x \) is covered by some selected set, so at least one leaf \( l_{j,x} \) is free for every \( x \). Assign each \( B_x \) to such a leaf. Total utility from element agents is \(q\). We have a net utility of \(t\) from this allocation.

\( (\Leftarrow) \) \emph{Valid utilitarian allocation exists.}
We analyze the maximum possible utility that each group of agents can contribute to the global welfare. A set agent can achieve a maximum utility of \( 1+q \) (by taking a center). If not assigned a center, they can take at most one leaf due to connectivity constraints, yielding utility at most 1. Since the graph has only \( k \) centers, at most \( k \) set agents can achieve the higher utility. Thus, the total utility of the \( r \) set agents is strictly bounded by:
\(u_{\text{C}}^{\max} = k(1+q) + r-k = r + kq.\)
There are \( q \) Element Agents which can only assume a single leaf due to their valuation scheme and connectivity constraints. Their total utility is strictly bounded by:
\(u_{\text{B}}^{\max} = q.\)
The target welfare is exactly the sum of these upper bounds: \( t = u_{\text{B}}^{\max} + u_{\text{C}}^{\max} \).
Therefore, any allocation achieving \( t \) must simultaneously maximize the utility of \emph{both} groups.

Maximizing total valuation for set agents requires exactly \( k \) of them to occupy the centers. Let these be the ``Selected'' sets.
Maximizing welfare for element agents requires every agent \( B_x \) to receive utility 1. This implies that for every \( x \in U \), there exists \(j\) for which the leaf \( l_{j,x} \) is left free by the center owner. Leaf \( l_{j,x} \) is free only if the center owner \( C_i \) has \( x \in S_i \).
Thus, the condition that all \( q \) element agents are satisfied implies that \( \bigcup_{i \in \text{Selected}} S_i = U \).
The selected agents correspond to a valid Set Cover of size \( k \).
\qed \end{proof}
\end{document}